\renewcommand{\cite}[1]{\citep{#1}}
\newcommand{\showfigure}[1]{{#1}}
\newcommand{\ShowFigures}{\renewcommand{\showfigure}[1]{{##1}}}
\newcommand{\fullversion}[1]{{#1}}
\newcommand{\shortversion}[1]{}
\newcommand{\ShortVersion}{%
  \renewcommand{\fullversion}[1]{}%
  \renewcommand{\shortversion}[1]{{##1}}}
\newcommand{\E}{{\mathbb{E}}}
\newcommand{\R}{{\mathbb{R}}}
\newcommand{\set}[1]{\mathcal{#1}}
\renewcommand{\vec}[1]{\boldsymbol#1}
\newcommand{\equ}[1]{eq.~(\ref{eq:#1})}
\newcommand{\fig}[1]{Fig.~\ref{fig:#1}}
\newcommand{\cart}{\prod} 
\newcommand{\grad}{\mathtt{grad}}
\newcommand{\bftheta}{{\boldsymbol\theta}}
\newtheorem{theorem}{Theorem}
\newtheorem*{theorem*}{Theorem}
\newtheorem{corollary}[theorem]{Corollary}
\newtheorem{definition}{Definition}
\newtheorem{example}{Example}
\newtheorem{lemma}[theorem]{Lemma}
\newtheorem{proposition}[theorem]{Proposition}
\newcommand{\be}{\begin{equation}}
\newcommand{\bel}[1]{\begin{equation}\label{#1}}
\newcommand{\qe}{\end{equation}}
\newcommand{\ee}{\end{equation}}
\newcommand{\eeq}{\end{equation}}
\newcommand{\ba}{\begin{eqnarray}}
\newcommand{\ea}{\end{eqnarray}}
\newcommand{\rf}[1]{(\ref{#1})}
\date{\today}                      
\title{Value of information in noncooperative games}
\author[1]{Nils Bertschinger\thanks{nils.bertschinger@mis.mpg.de}}
\author[2]{David H. Wolpert\thanks{http://davidwolpert.weebly.com}}
\author[1]{Eckehard Olbrich\thanks{olbrich@mis.mpg.de}}
\author[1,2]{J\"urgen Jost\thanks{jjost@mis.mpg.de}}
\affil[1]{Max Planck Institute for Mathematics in the Sciences, Inselstra\ss e 22, D-04103 Leipzig, Germany}
\affil[2]{Santa Fe Institute, 1399 Hyde Park Road, Santa Fe, NM 87501, USA}
\begin{document}

\maketitle
 
\begin{abstract}
 
In some games, additional information  
hurts a player, e.g., in games with first-mover advantage,
the second-mover is hurt by seeing the first-mover's move.   
What properties of a game determine whether it has such negative ``value of
information'' for a particular player? Can a game have negative value of information for all players? 
To answer such questions, we generalize the definition of marginal
utility of a good 
to define the marginal utility of
a parameter vector specifying a game.
So rather than analyze the global structure of the relationship between a game's parameter
vector and player behavior, as in previous work, we focus on the local structure of
that relationship. This allows us to prove that generically, \underline{every} game
can have negative marginal value of information, unless one imposes \emph{a priori}
constraints on allowed changes to the game's parameter vector.  We
demonstrate these and related results numerically, and discuss their implications.
\end{abstract}

\section{Introduction}

How a player in a noncooperative game behaves typically
depends on what information she has both about her physical environment and about the 
behavior of the other players. Accordingly, the joint behavior of multiple interacting
players can depend strongly on the information available to the
separate players, both about one another, and about Nature-based random
variables. Precisely \emph{how} the joint behavior of the players depends on this information
is determined by the preferences of those
players. So in general there is a strong interplay among the information
structure connecting a set of players, the
preferences of those players, and their behavior. 

Previous analyses of this interplay have considered how player
behavior changes under global, non-infinitesimal changes to the parameters specifying the 
underlying game. Here we pursue a different approach, 
generalizing the concept of ``marginal value of a good" from the setting
of a single decision-maker in a game against Nature to a multi-player setting.
In other words, rather than consider the global structure of the relationship between the game parameters
and player behavior, we focus on the local structure of that relationship.

This analysis of the local structure allows us construct general theorems on
when there is a change to an information structure that will  \emph{reduce} information 
available to a player but \emph{increase} their expected utility. It also allows us to 
construct extended ``Pareto" versions of these theorems, specifying when 
there is a change to an information structure that will both reduce information 
available to \emph{all} players and increase \emph{all} of their expected utilities.

We illustrate these theoretical results with computer experiments involving
the noisy leader-follower game. We also discuss the general implications of these results for
well-known issues in the economics of information.

\subsection{Value of information}
\label{sec:val_of_info}
  
Intuitively, it might seem that a rational decision maker cannot be
hurt by additional information. After all, that is the standard
interpretation of Blackwell's famous result that adding noise to an
observation by sending it through an additional channel, called {\em
  garbling}, cannot improve expected utility of a Bayesian decision
maker in a game against Nature~\cite{blackwell1953equivalent}. However
games involving multiple players, and/or bounded rational behavior,
might violate this intuition.

To investigate the legitimacy of this intuition for general
noncooperative games, we first need to formalize what it means to have
``additional information". To begin, consider the simplest case, of a
single-player game. We can compare two scenarios: One where the player
can observe a relevant state of Nature, and another situation that is
identical, except that now she \emph{cannot} observe that state of
Nature. More generally, we can compare a scenario where the player
receives a noisy signal about the state of Nature to a scenario that
is identical except that the signal she receives is strictly noisier (in a certain sense)
than in the first scenario.  Indeed, in his seminal paper
\cite{blackwell1953equivalent}, Blackwell proved that the set of 
changes to an information channel that can never increase the expected utility of the player
are precisely those that are equivalent to sending the signal through an additional channel, and thereby introducing
extra noise.
So at least in a game against Nature, one can usefully define the ``value
of information'' as the difference in highest expected utility that can be
achieved in a low noise scenario (more information) compared to a high noise scenario (less 
information)~\cite{howard2005influence}, and 
prove important properties about this value of information.

In trying to extend this reasoning from a single player game to a
multi-player game two complications arise. First, in a multi-player game
there can be multiple equilibria, with different expected utilities
from one another. All of those equilibria
will change, in different ways, when noise is added to an information channel
connecting players in the game. Indeed, even the
number of equilibria may change when noise is added to a channel. This means
there is no well-defined way to compare equilibrium behavior in a ``before" scenario with equilibrium behavior
in an ``after" scenario in which noise
has been added; there is arbitrariness in which pair of equilibria, one from each scenario, we use for the comparison.
Note that there is no such ambiguity in a game against Nature. (In addition,
this ambiguity does not arise in the Cournot scenarios discussed below if we restrict
attention to subgame perfect equilibria.)

A second complication is that in a multi-player game \emph{all} of the players will react to
a change in an information channel, if not directly then indirectly,  via
the strategic nature of the game. This 
effect can even result in a negative value of information, in that it means
a player would prefer less (i.e., noisier) information. Indeed, such negative
value of information can arise even when both the ``before" and ``after" scenarios
have unique (subgame perfect) equilibria, so that there is no ambiguity in choosing which two
equilibria to compare.

To illustrate this, consider the Cournot duopoly where two competing manufacturers of a good simultaneously
choose a production level. Note that, as far as its equilibrium
structure is concerned, this scenario is equivalent to one player --- the ``leader" --- choosing
his production level first, but the other player, the ``follower", having no information 
about the leader's choice before making her choice. Assuming that both players can
produce the good for the same cost and that the demand function is linear, it
is well known that in that equilibrium both players get the same
profit. 

Now change the game by having the follower observe the
leader's move before she moves. So the only change is that the follower now has
more information before making her move. In this new game, the leader can choose
a higher production level compared to the production level of the simultaneous
move game --- the monopoly production level --- and the follower
has to react by choosing a lower production level. Thus, the follower is
actually hurt by this change to the game that results in her having more information. 

In this example, the leader changes
his move to account for the information that (he knows that) the follower will receive. Then, after receiving
this information, the follower cannot credibly
ignore it, i.e., cannot credibly behave as in the simultaneous move game equilibrium. So this equilibrium of the new game, where the
follower is hurt by the extra information, is subgame-perfect.
These and several other examples of negative value of information can
be found in the game theory literature (see section~\ref{sec:PreviousWork} for references). 

In this paper we introduce a broad framework that overcomes these two
complications which distinguish multi-player games from single-player
games. This framework is based on generalizing the concept of the
``marginal value of a good", to a decision-maker in a game against
Nature, so that it can apply to multi-player game scenarios. This
means that in our approach, the ``before" and ``after" scenarios
traditionally used to define value of information in games against
Nature are infinitesimally close to one another. More precisely, we
consider how much the expected utility of a player changes {\em per
  unit change in the amount of information} as one infinitesimally
changes the conditional distribution specifying the information
channel in a game. (This is illustrated in
Fig.~\ref{fig:marginal_value}.) Interestingly, such a local picture is
widely used in decision theory, i.e. marginal value, but has, to our
knowledge not been considered in game theory. Instead, as in the
example above, investigations of value of information have been based
exclusively on comparing global changes to the information structure,
e.g. no vs. full information.
 \noindent
 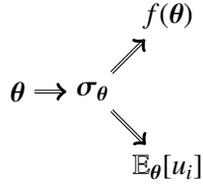
\begin{figure}[tbp]
   \begin{center}
     \begin{tikzpicture}
       \node (T) at (0,1) {$\vec{\theta}$};
       \node (S) at (1,1) {$\vec{\sigma}_{\vec{\theta}}$};
       \node (V) at (2,0) {$\E_{\vec{\theta}}[u_i]$};
       \node (f) at (2,2) {$f(\vec{\theta})$};
       \draw[->,double] (T) to (S);
       \draw[->,double] (S) to (V);
       \draw[->,double] (S) to (f);
       \end{tikzpicture}
   \end{center}
        \noindent \caption{Both the expected utility of player $i$ and the amount of information player $i$ receives
        depend, in part, on the strategy profile of all the players, $\vec{\sigma}$. Via the equilibrium concept, that profile in turn depends 
        on the specific conditional distributions $\vec{\theta}$
        in the information channel providing data to player $i$. So a change to $\vec{\theta}$
        results in a coupled change to the expected utility of player $i$, $\E_{\vec{\theta}}[u_i]$, and to the amount of information in their channel, $f({\vec{\theta}})$.
        The ``marginal value of information" to $i$ is how much $\E_{\vec{\theta}}(u_i)$ changes 
        per unit change in $f(\vec{\theta})$ if $\vec{\theta}$ is changed infinitesimally.}
 \label{fig:marginal_value}
 \end{figure}

 In the next subsection we provide a careful motivation for our
 ``marginal value of information" approach. As we discuss in the
 following subsection, this requires us to choose a cardinal measure
 of amount of information as well as a way to relate infinitesimal
 changes in utility to infinitesimal changes in information. Next we
 discuss the broad benefits of our approach, e.g., as a way to
 quantify marginal rates of substitution of different kinds of
 information arising in a game.  After this we relate our framework to
 previous work. We end this section by providing a roadmap to the rest
 of our paper.

\subsection{From consumers making a choice to multi-player games}
\label{sec:generalize_marg_util}

To motivate our framework, first consider the
simple case of a consumer whose preference function depends jointly on
the quantities of all the goods they get in a market. Given some
current bundle of goods, how should we quantify the value they assign to getting
more of good $j$? The usual economic answer is the
\emph{marginal utility of good $j$} to the consumer, i.e., the
derivative of their expected utility with respect to amount of good
$j$.

Rather than ask what the marginal value of good $j$ is to the
consumer, we might ask what their marginal value is for some linear
combination of $j$ and a different good $j'$. The natural answer is
their ``marginal value'', i.e. is the marginal utility of that precise
linear combination of goods.

More generally, rather than considering the marginal value to the
consumer of a linear combination of the goods, we might want to
consider the marginal value to them of some arbitrary, perhaps
non-linear function of quantities of each of the goods. What marginal
value would they assign to \emph{that}?

To ground our thinking, consider an example
where the function is the total weight of all the goods.
What would be the ``value of weight''? To answer this
question, write the bundle of goods the consumer possesses as
$\vec{\theta}$, i.e. $\theta^j$ is the amount of good $j$. Then write
the consumer's expected utility as $V(\vec{\theta})$, and the weight
of the bundle as the function $f(\vec{\theta})$. We want to relate how the
utility changes when the weight of the bundle is changed
infinitesimally. 

In general, there is no unique such relation since the weight $f$ is a
scalar-valued function and thus different changes $\delta
\vec{\theta}$ to the bundle could lead to the same change in the
weight of the bundle, even though they might result in different
changes to the expected utility. So to answer our question we need to
fix the direction of change $\delta \vec{\theta}$ to the bundle of
goods. Once we have done that, then to first-order the utility changes
by $\langle \grad V, \delta \vec{\theta}\rangle$ while the weight is
changed by $\langle \grad f, \delta
\vec{\theta}\rangle$.{\footnote{Here and throughout, $\langle \cdot,
    \cdot \rangle$ denotes a dot product, $\grad f$ is the gradient of
    $f$ and $\langle \grad f, \delta \vec{\theta}\rangle$ gives the
    length of the gradient vector when projected onto the direction
    $\delta\vec{\theta}$. }}

Accordingly we define the
``differential value of $f$ (in the direction $\delta \vec{\theta}$)'' as
\begin{equation}
  \frac{\langle \grad V, \delta \vec{\theta}\rangle}{\langle \grad f,
    \delta \vec{\theta}\rangle}
  \label{eq:proposed_measure}
\end{equation}
Thus, a change to the bundle has a high differential value of $f$, e.g. of weight, 
when that change (in the direction $\delta \vec{\theta}$) results in a large gain in utility for a small
change in $f$.

In some situations one might wish to speak of the ``value of $f$" without specifying a
specific direction of change to the vector-valued argument of $f$. Indeed, all
previous quantifications of ``value of information" that we know of, both in
game theory and in other fields (e.g., analysis of influence diagrams~\cite{howard2005influence,komi03}), quantifies
value of information without specifying a direction.

We can extend our approach to provide such a ``direction-free"
quantification of value of information, by choosing the direction
$\delta\vec{\theta} = \grad f$ which results in the largest change of
$f$, i.e. steepest ascent. Thus, we can quantify the ``differential
value of $f$'' as
\begin{eqnarray}
\frac {\langle \grad(V), \grad f \rangle}{|| \grad f ||^2} 
\label{eq:proposed_measure_2}
\end{eqnarray}
The measure in~\eqref{eq:proposed_measure_2} says that if a small
change in the value of $f$ leads to a big change in expected utility,
$f$ is more valuable than if the same change in expected utility
required a bigger change in the value of $f$. Furthermore, it
expresses the value of information in units of value (of an
infinitesimal change to $f$) \emph{per unit of $f$}. Thus, we follow
the conventional terminology where we would typically say that how
much the consumer values a change to good $j$ is given by the
associated change in utility \emph{divided by the amount of change in
  good} $j$.  (After all, that tells us change in utility \emph{per
  unit change in the good}.)

All of the reasoning above can be carried over
from the case of a single consumer to apply to multi-player scenarios. To see how, first
note that in the reasoning above, $\vec{\theta}$ is simply the parameter vector determining the utility of the consumer.
In other words, it is the parameter vector specifying the details of
a game being played by a decision maker in a game against Nature.
So it naturally generalizes to a multi-player game, as the parameter vector specifying the details of such
a game.

Next,  replace the consumer player in the reasoning above
by a particular player in the multi-player game. The key to the reasoning above is that 
specifying $\vec{\theta}$ specifies the expected utility of the consumer player.
In the case of the consumer, that map from parameter vector to expected utility
is direct. In a multi-player game, that direct map becomes an indirect map
specified in two stages: First by the equilibrium concept, taking $\vec{\theta}$
to the mixed strategy profile of all the players, and then from that profile to the expected utility of any
particular player. (Cf., Fig.~\ref{fig:marginal_value}.)

As mentioned above though, there is an extra complication in the multi-player case that is
absent in the case of the single consumer. Typically multi-player games have multiple equilibria for any $\vec{\theta}$, and 
therefore multiple values of $V(\vec{\theta})$. (In Fig.~\ref{fig:marginal_value}, the map from $\vec{\theta}$ to the
mixed strategy profile is multi-valued in games with multiple players.) However we need
to have the mapping from $\vec{\theta}$ to the expected utility of the
players be single-valued to use the reasoning above.  This means that we have to be careful when
calculating derivatives
to specify which precise branch of the set of equilibria we are
considering. Having done that, our generalization from the definition of marginal utility
for the case of a consumer choosing a bundle of goods to marginal utility for a player
in a multi-player game is complete.

\subsection{General comments on the marginal value approach}

There are several aspects of this general framework that are important to emphasize. First,
there is no reason to restrict attention to Nash equilibria (or some appropriate refinement). All
that we need is that $\vec{\theta}$ specifies (a set of) equilibrium expected utilities for
all the players. The equilibrium concept can be anything at all.

Second, note that $\vec{\theta}$, together with the
solution concept and choice of an equilibrium branch, specifies the mixed strategy profile of the players, as well
as all prior and conditional probabilities. So it specifies
the distributions governing the joint distribution over all random
variables in the game.  Accordingly, it specifies the values of all
cardinal functions of that joint distribution. So in particular, however
we wish to quantify ``amount of information", so long as it is a function
of that joint distribution, it is an indirect function of $\vec{\theta}$ (for a fixed
solution concept and associated choice of a solution branch).
This means we can  apply our analysis for any such quantification of the amount
of information as a function $f(\vec{\theta})$.

 We have to make several choices every time we use our approach.
 One is that we must choose what parameters of the game
 to vary. 
In addition, when analyzing value of information in a particular direction,
we have to specify that direction.  Taken together these choices fix what
economic question we are considering. Similar choices (e.g., of what
game parameters to allow to vary) arise, either implicitly or
explicitly, in any economic modeling.
Furthermore, we must decide 
what information measures we wish to analyze. (We address this issue in the next subsection.)

Finally, if we wish to analyze value of information 
we confront an additional, purely formal choice, unique to analyses of marginal values. 
This is the choice of what coordinate system to use to evaluate the dot products and gradients in Eq.~\eqref{eq:proposed_measure_2}.
The difficulty is that changing the coordinate system changes the
values of both dot products{\footnote{To give a simple example that the dot product can change
depending on the choice of coordinate system,
    consider the two Cartesian position vectors $(1, 0)$ and $(0, 1)$. Their
    dot product in Cartesian coordinates equals 0.  However if we
    translate those two vectors into polar coordinates we get $(1, 0)$
    and $(1, \pi /2)$.  The dot product of these two vectors is $1$,
    which differs from $0$, as claimed.}} and gradient vectors
in general.  So different choices of
coordinate system would give different marginal values of information. 
However since the choice of coordinate system is purely a modeling choice,
we do not want our conclusions to change if we change
how we parametrize the noise level in a communication channel, for
example. 

To address this problem we need to introduce a metric, as in
differential geometry.  In
different economic scenarios, different choices of metric may be
appropriate. In general, they can change our results.

Here, for reasons
of space, we concentrate on metric-independent results. Differential value of information
at a particular $\vec{\theta}$ depends on the choice of metric in general.{\footnote{Indeed, 
in some circumstances changing the metric can change
differential value of information from being positive to being negative.
This is just a specific instance of the general phenomena that the
change in the value of a function $V(\vec{\theta})$ for an infinitesimal
step in $\vec{\theta}$ along a gradient of a second function $f(\vec{\theta})$
may change from positive to negative, for certain
kinds of change to the metric. This is ultimately due to the fact that the metric specifies 
the shape of the ellipsoid specifying all points $\vec{\theta}'$
that lie an ``infinitesimal distance" from some current point $\vec{\theta}$, and
therefore specifies the direction of the gradient of $f(\vec{\theta})$.}}  However
the differential value of information in a given direction does not
depend on the metric (or coordinate system) as explained in section~\ref{sec:DiffValue}. 
Accordingly, in this paper we restrict attention to differential
value of information in a given direction.

\subsection{How to quantify information}
\label{sec:how_quantify_info}

To use the framework outlined in Sec.~\ref{sec:generalize_marg_util}, we must choose a function $f(\vec{\theta})$ that
measures the amount of information in a game with parameter $\vec{\theta}$.
Traditionally, a player's information is represented in game theory by a signal that
the player receives during the game\footnote{This includes information
  partitions, in the sense that the player is informed about which
  element of her information partition obtains.}. Thus information is
often thought of as an object or commodity. But this general approach does not integrate
the important fact that a signal is only informative to the
extent that it changes what the player believes
\emph{about some other aspect of the game}. It is the 
relationship between the value of the signal and that other aspect of the game
that determines the ``amount  of information" in the signal.

More precisely, let $y$, sampled from a distribution $p(y)$, be a payoff-relevant variable whose
state player $i$ would like to know before making her move, but which she cannot observe directly. 
Say that the value $y$  is used to generate a datum/signal $x$, that the player directly observes, via a conditional
distribution $p(x \mid y)$ which we typically refer to as an {\em information channel}. If for some reason
the player ignored $x$, then she would assign the \emph{a-priori} likelihood $p(y)$ to $y$,
even though in fact its \emph{a-posteriori} likelihood is $p(y \mid x) = \frac{p(y) p(x \mid y)}{\sum_{y'}p(y') p(x \mid y')}$. 
This change in the likelihoods she would assign to
$y$ is a measure of the information that $x$ provides about $y$. Arguably, this change of distribution
is the core property  of information that is of interest in economic scenarios. 

Averaging over observation $x$ and state $y$'s, and working in log space, this change in the 
likelihood she would assign to the actual $y$ if she ignored $x$ (and so used likelihoods $p(y)$ rather than $p(y \mid x)$) is

\begin{eqnarray}
\label{mutual-deriv}
\sum_{x,y} p(x) p(y \mid x) \ln\bigg[\frac{p(y \mid x)}{p(y)} \bigg] .
\end{eqnarray}

Thus, eq.~\eqref{mutual-deriv} gives the (average) increase in information that player $i$ has about $y$ due to observing $x$.
Note that this is true no matter how the variables $X$
and $Y$ arise in the strategic interaction. In particular, this interpretation of the quantity
in Eq.~\eqref{mutual-deriv} does not require that the value $x$ arise directly through 
a pre-specified distribution $p(x \mid y)$. $x$ and $y$ could instead be
variables concerning the strategies of the players at a particular equilibrium. 

In this sense, we have shown that the quantity in Eq.~\eqref{mutual-deriv} 
is a proper way to measure  the
information relating any pair of variables arising in a strategic scenario. None of the
usual axiomatic arguments motivating Shannon's information theory~\cite{mack03,coth91} were used in
doing this. 
Nevertheless, the quantity in Eq.~\eqref{mutual-deriv} is just the
mutual information between $X$ and $Y$, as defined by Shannon.
Mutual information, together with the related quantity of entropy,
forms the basis of information theory. It not only allows us to quantify
information, but has many applications in different areas ranging from
coding theory to machine learning to evolutionary biology.

In game theory, information is more commonly expressed in terms of an
information partition. An information partition $\{\set{A}_1, \ldots,
\set{A}_n\}$ (on some measure space $\Omega$) can be viewed as a
random variable with values $x \in \set{X} = \{1, \ldots, n\}$,
i.e. the signal $x$ reveals which element of the partition was hit.
Coarsening that partition can then be considered as a (deterministic)
information channel $p(y \mid x)$ from $x \in \set{X}$ to a value $y
\in \set{Y}$ in the coarser partition.  Now when we want to evaluate
how much information the agent obtains from the coarser partition $Y$
about some other random variable $N$, e.g. corresponding to a state of
Nature, we see that $N \rightarrow X \rightarrow Y$ is a Markov
chain\footnote{$X \to Y \to Z$ are said to form a Markov chain when
  $p(x,y,z) = p(x) p(y \mid x) p(z \mid y)$.}.  Thus, the {\em
  data-processing inequality}\footnote{The data processing inequality
  is an important theorem in information theory which shows that
  information can not be increased by processing it (via an
  information channel $p(z \mid y)$).
  \begin{theorem*}
    Let $X \to Y \to Z$ form a Markov chain. Then, $I(X ; Y) \geq I(X
    ; Z)$
 \end{theorem*}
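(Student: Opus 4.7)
The plan is to use the chain rule for mutual information in two different ways and then exploit the Markov assumption to eliminate one of the resulting conditional mutual information terms. First I would expand $I(X; Y, Z)$ in two ways, conditioning once on $Y$ and once on $Z$:
\begin{align*}
I(X; Y, Z) &= I(X; Y) + I(X; Z \mid Y), \\
I(X; Y, Z) &= I(X; Z) + I(X; Y \mid Z).
\end{align*}
Equating the right-hand sides gives the identity $I(X; Y) - I(X; Z) = I(X; Y \mid Z) - I(X; Z \mid Y)$, so the entire argument reduces to showing that the first term on the right is nonnegative and the second term is zero.

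Next I would invoke the Markov factorization $p(x,y,z) = p(x)\, p(y \mid x)\, p(z \mid y)$ stated in the hypothesis. A short computation — dividing by $p(y) = \sum_{x'} p(x')\, p(y \mid x')$ and applying Bayes' rule — yields $p(x, z \mid y) = p(x \mid y)\, p(z \mid y)$, i.e., $X$ and $Z$ are conditionally independent given $Y$. Substituting this factorization into the definition of $I(X; Z \mid Y)$ makes the logarithm vanish identically, so $I(X; Z \mid Y) = 0$.

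Finally I would apply the nonnegativity of conditional mutual information, which is most cleanly obtained by writing $I(X; Y \mid Z)$ as an expectation over $Z$ of KL divergences between the joint and product-of-marginal distributions of $(X, Y)$ given $Z$, and then using Jensen's inequality (equivalently, the log-sum inequality) to conclude that each such KL divergence is $\geq 0$. Combining with the previous step gives $I(X; Y) - I(X; Z) = I(X; Y \mid Z) \geq 0$, which is the desired inequality.

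The main subtlety — not so much a true obstacle as a matter of bookkeeping — is to make sure the chain-rule identity and the nonnegativity of (conditional) mutual information are established directly from the definition in terms of KL divergence, rather than from some derived fact that might secretly presuppose the data-processing inequality itself. Both are routine consequences of $\sum p \ln(p/q) \geq 0$ applied either marginally or conditionally, so the argument remains self-contained.
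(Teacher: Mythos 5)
Your proof is correct and complete: the two chain-rule expansions of $I(X;Y,Z)$, the observation that the Markov factorization $p(x,y,z)=p(x)p(y\mid x)p(z\mid y)$ forces $I(X;Z\mid Y)=0$, and the nonnegativity of $I(X;Y\mid Z)$ together give $I(X;Y)=I(X;Z)+I(X;Y\mid Z)\geq I(X;Z)$. The paper itself does not prove this statement --- the data-processing inequality is quoted in a footnote as a known theorem from information theory (with the standard references) --- and your argument is precisely the canonical proof found in those sources, with the appropriate care taken that the chain rule and nonnegativity are derived directly from the KL-divergence definition rather than from anything circular.
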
} applies and the mutual information between $N$ and
$Y$ cannot exceed the mutual information between $N$ and $X$. So by
using mutual information, we can not only state that the amount of
information is reduced when an information partition is coarsened, but
also quantify by how much. Furthermore, in this way, mutual
information provides a cardinal measure that is compatible with the
partial order based on coarsening information partitions.

When the distributions of $X$ or $Y$ are not fixed, as they might
correspond to moves of players in a game, it it more natural to
consider the information than can potentially transmitted by the
channel. In information theory this is measured by the {\bf channel capacity} $C$ of
an information channel $p(y \mid x)$ from $X$ to $Y$ as $C =
\max_{p(X)} I( X ; Y)$.  Unfortunately, in general we cannot solve the
maximization problem defining information capacity analytically. So
closed formulas for the channel capacity are only known for special
cases. This in turn means that partial derivatives of the channel
capacity with respect to the channel parameters are difficult to
calculate in general.  One special case where one can make that
calculation is the binary (asymmetric)
channel\cite{doi:10.1117/12.609436}.  For this reason, we will use
that channel in the examples considered in this paper that involve
marginal value of information capacity.\footnote{Another important
  class of information channels with known capacity are the so called
  symmetric channels \cite{coth91}. In this case, the noise is
  symmetric in the sense that it does not depend on a particular
  input, i.e. the channel is invariant under relabeling of the
  inputs. This class is rather common in practice and includes
  channels with continuous input, e.g. the Gaussian channel.}

While mutual information and channel capacity will be the typical
choices of $f$ in our computer experiments presented below, our
general theorems hold for arbitrary choices of $f$, even those that
bear no relation to concepts from Shannon's information theory.  Note
that even once we decide to use information theoretic quantities of a
signal to quantify information, we must still make the essentially
arbitrary choice of \emph{which} signal, to which player, concerning
which other variable, we are interested in. So for example, we might
be interested in the mutual information between some state of Nature
and the move of player 1.  Or the channel capacity between the
original move of player 1 and the last move of player 1.

\subsection{Benefits of the marginal value approach}

This approach of making infinitesimal changes to information channels
and examining the ramifications on expected utility is very
general and can be applied to any information channel in
the game. That means for example that we can add infinitesimal noise to
an information channel that models dependencies between different states of Nature and
examine the resultant change in the expected utility of a player. As another example,
we can change the information channel between two of the players in the game, and
analyze the implications for the expected utility of a third player in the game.

In fact, the core idea of this approach extends beyond making infinitesimal
changes to the noise in a channel. At root, what we are doing is making an infinitesimal change
to the parameter vector that specifies the noncooperative game. This 
differential approach can be applied to other kinds of
infinitesimal changes besides those involving noise vectors in
communication channels. For example, it can be applied to a change to
the utility function of a player in the game (see example \ref{ex:neg_val_ut}). As another example, the changes
can be applied to the rationality exponent of a player under a logit quantal response
equilibrium~\cite{mcpa98}. 
This flexibility allows us to extend Blackwell's idea of ``value of
information'' far beyond the scenarios he had in mind, to (differential) value of \emph{any defining
characteristics of a game.}
This in turn allows us to calculate marginal rates of substitution of any component of a game's
parameter vector with any other component, e.g., the marginal rate of substitution for player $i$ of (changes to)
a specific information channel and of (changes to) a tax applied to player $j$.

More generally still, there is nothing in our framework that requires us to consider marginal values \emph{to
a player in the game}. So for example, we can apply our analysis to calculate marginal social welfare of (changes to)
information channels, etc. Carrying this further, we can use our framework to calculate marginal rates of substitution in
noncooperative games to an external regulator concerned with social welfare who is able to change some parameters
in the game.

In this context, the need to specify a particular branch of the game is a benefit
of the approach, not a shortcoming. To see why, consider how a (toy model of a regulator) concerned with social welfare
would set some game parameters, according to conventional economics analysis. 
The game and associated set of parameter
vectors is considered \emph{ab initio}, and
an attempt is made to find the global optimal value of the parameter vector. However
whenever the game has multiple equilibrium branches, in general what parameter vector
is optimal will depend on which branch one considers --- and there is
no good generally applicable way of predicting which branch
will be appropriate, since that amounts to choosing a universal refinement.

However our framework provides a different way for the regulator to
control the parameter vector. The idea is to start with the
\emph{actual} branch that gives an actual, current player profile for
a currently implemented parameter vector $\vec{\theta}$.  We then tell
the regulator what direction to incrementally change that parameter
vector \emph{given that the players are on that branch}. No attempt is
made to find an \emph{ab initio} global optimum. So this approach
avoids the problem of predicting what branch will arise --- we use the
one that \emph{is actually occurring}. Furthermore, the parameters
can then be changed along a smooth path leading the players from the
current to the desired equilibrium (see \cite{woha12} for an example
of this idea).

\subsection{Previous work}
\label{sec:PreviousWork}

In his famous theorem, Blackwell formulated the imperfect information of the
decision maker concerning the state of Nature as an information
channel from the move of Nature to the observation of the decision
maker, i.e., as conditional probability distribution, leading from the
move of Nature to the observation of the decision maker. This is a
very convenient way to model such noise, from a calculational
standpoint. As a result, it is the norm for how to formulate imperfect
information in Shannon information theory~\cite{coth91,mack03}, which
analyses many kinds of information, all formulated as real-valued
function of probability distributions.  Indeed, use of conditional
distributions to model imperfect information is the norm in all of
engineering and the physical sciences, e.g., computer science, signal
processing, stochastic control, machine learning, physics, stochastic
process theory, etc.

There were some early attempts to use Shannon information theory in
economics to address the question of the value of information. Except for
special cases such as multiplicative payoffs (Kelly gambling
\cite{Kelly1956}) and logarithmic utilities \cite{Arrow1971}, where the
expected utility will be proportional to the Shannon entropy, the use
of Shannon information was considered to provide no additional insights. Indeed,
\citet{Radner1984} rejected the use of any single
valued function to measure information because it provides a total
order on information and therefore allows for a negative value of
information even in the decision case considered by Blackwell.

In multi-player game theory, i.e. multi-agent decision situations,
the role of information is even more involved. Here,
many researchers have constructed special games, showing that the
players might prefer more or less information depending on the
particular structure of the game (see \cite{lepp77} for an early
example). This work showed that Blackwell's result cannot directly be
generalized to situations of strategic interactions.

Correspondingly, the most common formulation of imperfect information
in game theory does not use information channels let alone Shannon
information. Instead, states of Nature are lumped using information
partitions specifying which states are indistinguishable to an
agent. In this approach, more (less) information is usually modeled as
refining (coarsening) an agent's information partition.  In
particular, noisy observations are formulated using such partitions in
conjunction with a (common) prior distribution on the states of
Nature. Even though, this is formally equivalent to conditional
distributions, it leads to a fundamentally different way of thinking
about information. The formulation of information in terms of
information partitions provides a natural partial order based on
refinining partitions. Thus, in contrast to Shannon information
theory, which quantifies the amount of information, it cannot compare
the information whenever the corresponding partitions are not related
via refiniments. In addition, the avoidance of conditional distributions
makes many calculations more difficult.

Recently, some work in game theory has made a distinction between the
``basic game''and the ``information structure''\footnote{According to
  \citet{Gossner2000} this terminology goes back to Aumann.}: The
\emph{basic game} captures the available actions, the payoffs and
the probability distribution over the states of Nature, while the
information structure specifies what the players believe about the
game, the state of Nature and each other (see for instance
\cite{Lehrer2013,Bergemann2013a}).  More formally this is expressed in
games of incomplete information having each player observing a signal,
drawn from a conditional probability distribution, about the state of
Nature. In principle these signals are correlated. The effects of
changes in the information structure were studied by considering
certain types of garbling nodes as by Blackwell. While this goes
beyond refinements of information parttions, it still only provides a
partial order of information channels.

\citet{Lehrer2013} showed that if two information
structures are equivalent with respect to a specific garbling the game
will have the same equilibrium outcomes. Thus, they characterized the
class of changes to the information channels that leave the players
indifferent with respect to a particular solution concept. Similarly,
\citet{Bergemann2013a} introduced a Blackwell-like
order on information structures called ``individual sufficiency" that
provides a notion of more and less informative structures, in the sense
that more information always shrinks the set of Bayes correlated
equilibria. A similar analysis relating the set of equilibria between
different information structures has been obtained by \citet{Gossner2000} and is in line with his work
\cite{DBLP:journals/geb/Gossner10} relating more knowledge of the
players to an increase of their abilities, i.e. the set of possible
actions available to them. As formulated in this work, more information can be seen to
increase the number of constraints on a possible solution for the
game.

Overall, the goal of these attempts has been to characterize changes
to information structures which imply certain properties of the
solution set, independent of the particular basic game. This is
clearly inspired by Blackwell's result which holds for all possible
decision problems. So in particular, these analyses aim for results
that are independent of the details of the utility function(s). Moreover, the
analyses are concerned with results that hold simultaneously for all
solution points (branches) of a game. Given these constraints on
the kinds of results one is interested in, as observed by Radner and Stiglitz, Shannon
information (or any other quantification of information) is not of
much help.

In contrast, we are concerned with analyses of the role of information in
strategic scenarios that concern a particular game
with its particular utility functions. Indeed, our analyses focus on
a single solution point at a time, since the role of information for the
exact same game game will differ depending on which solution branch
one is on. Arguably, in many scenarios regulators and analysts of a strategic
scenario are specifically interested in the \emph{actual} game being played,
and the \emph{actual} solution point describing the behavior of its players. As
such, our particularized analyses can be more relevant than broadly applicable
analyses, which ignore such details.

While not being much help in the broadly applicable analyses of
\citet{Bergemann2013a,Gossner2000,DBLP:journals/geb/Gossner10}, etc., we argue below that
Shannon information \emph{is} useful if one
wants to analyze the role of information in a particular game with its
specific utility functions. In this case, the idea of marginal utility
of a good to a decision-maker in a particular game against Nature
can be naturally extended ``marginal utility"  of information to a player
in a particular multi-player game on a particular solution branch of that game. Thus, one is naturally lead to a quantitative notion of
information and the differential value of information as elaborated
above.

\subsection{Roadmap}

In Sec.~\ref{sec:MAID}, we review Multi-Agent Influence Diagrams (MAIDs)
and explain why they are especially suited to study information in
games. Next, we introduce quantal response equilibria of MAIDs and show how to
calculate partial derivatives of the associated strategy profile with respect to components of the associated game parameter vector. 

Based on these definitions, in Sec.~\ref{sec:information_economics} we
define the differential value of information and in
Sec.~\ref{sec:properties} we prove that generically, in all games
there is such a direction in which information is decreased. Here,
generic refers to the parametrization of the game, i.e. we do not
assume any \emph{a priori} constraints on how the channel's
conditional distribution can be changed. In this sense, we prove that
in all games, i.e. for any type of utility, e.g. zero-sum, and any
structure, simultaneous or sequential, there is (a way to
infinitesimally change the channel that has) negative value of
information.
Similarly, we provide necessary and sufficient conditions for
a game to have negative value of information simultaneously for all players. (This 
condition can be viewed as a sort of``Pareto negative value of information''.)

Next, in Sec.~\ref{sec:Examples} we illustrate our proposed
definitions and results in a simple decision situation as well as an
abstracted version of the duopoly scenario that was discussed above,
in which the second-moving player observes the first-moving player
through a noisy channel.  In particular, we show that as one varies
the noise in that channel, the differential value of information is indeed
sometimes negative for the second-moving player, for certain starting
conditional noise distributions in the channel (and at a particular
equilibrium). However for other starting distributions in that channel
(at the same equilibrium), the differential value of information is positive
for that player. In fact, all four pairs of \{positive / negative\}
marginal value of information for the \{first / second\} -- moving
player can occur.

After this we present a section giving more examples. We end with a discussion
of future work and conclusions.

A summary of the notation we use is provided in Table.~\ref{tab:notation}. 

\begin{table}
  \begin{tabular}{rcl}
    \multicolumn{3}{l}{\underline{Information theory}} \\
    $\set{X}, \set{Y}$ & & Sets \\
    $x, y$ & & Elements of sets, i.e. $x \in \set{X}$ \\
    $X, Y$ & & Random variables with outcomes in $\set{X}, \set{Y}$ \\
    $\Delta_{\set{X}}$ & & Probability simplex over $\set{X}$. \\
    $I(X;Y)$ & & Mutual information between $X$ and $Y$ \\
    \multicolumn{3}{l}{\underline{Differential geometry}} \\
    $\vec{v}, \vec{\theta}$ & & Vectors \\
    $v^i$ & & $i$-th entry of contra-variant vector \\
    $v_i$ & & $i$-th entry of co-variant vector \\
    $g_{ij}$ & & Metric tensor. Its inverse is denoted by $g^{ij}$. \\
    $\frac{\partial}{\partial \theta^i}$ & & Partial derivative wrt/ $\theta^i$ \\
    $\grad(f)^i = g^{ij}\frac{\partial}{\partial \theta^i}$ & & Gradient of $f$, i.e. contra-variant direction of steepest ascent \\ 
    $\nabla \nabla f$ & & Hessian of $f$ \\
    $\langle \vec{v},\vec{w} \rangle_g$ & & Scalar product of $\vec{v}, \vec{w}$ wrt/ metric $g$ \\
    $|\vec{v}|_g$ & & Norm of vector $\vec{v}$ wrt/ metric $g$ \\
    \multicolumn{3}{l}{\underline{Multi-agent influence diagrams}} \\
    $G = (\set{V}, \set{E})$ & & Directed acyclic graph with vertices $\set{V}$ and edges $\set{E} \subset \set{V} \times \set{V}$ \\
    $\set{X}_v$ & & State space of node $v \in \set{V}$ \\
    $\set{N}$ & & Set of Nature or change nodes, i.e. $\set{N} \subset \set{V}$ \\
    $\set{D}_i$ & & Set of decision nodes of player $i$ \\
    $pa(v) = \{ u : (u,v) \in \set{E} \}$ & & Parents of node $v$ \\
    $p(x_v \mid x_{pa(v)})$ & & Conditional distribution at Nature node $v \in \set{N}$ \\
    $\sigma_i(a_v \mid x_{pa(v)})$ & & Strategy of player $i$ at decision node $v \in \set{D}_i$ \\
    $u_i$ & & Utility function of player $i$ \\
    $\E( u_i \mid a_i )$ & & Conditional expected utility of player $i$ \\
    $V_i = \E(u_i)$ & & Value, i.e. expected utility, of player $i$ \\
    \multicolumn{3}{l}{\underline{Differential value of information}} \\
    $\set{V}_{\delta \vec{\theta}}$ & & Differential value of direction $\delta \vec{\theta}$ \\
    $\set{V}_{f, \delta \vec{\theta}}$ & & Differential value of $f$ in direction $\delta \vec{\theta}$ \\
    $\set{V}_{f}$ & & Differential value of $f$ \\
    $Con(\{\vec{v}_i\})$ & & Conic hull of nonzero vectors $\{\vec{v}_i\}$ \\
    $Con(\{\vec{v}_i\})^\bot$ & & Dual to the conic hull $Con(\{\vec{v}_i\})$
  \end{tabular}
  \caption{\label{tab:notation} Summary of notation used throughout the paper.}
\end{table}

\section{Multi-agent influence diagrams}
\label{sec:MAID}

Bayes nets~\cite{kofr09} provide a very concise, powerful way to model
scenarios where there are multiple interacting Nature players (either
automata or inanimate natural phenomena), but no human players. They
do this by representing the information structure of the scenario in
terms of a Directed Acyclic Graph (DAG) with conditional probability
distributions at the nodes of the graph. In particular, the use of
conditional distributions rather than information partitions
greatly facilitates the analysis and associated computation of the
role of information in such systems. As a result they have become very
wide-spread in machine learning and information theory in particular,
and in computer science and the physical sciences more generally.

Influence Diagrams (IDs~\cite{howard2005influence}) were introduced to
extend Bayes nets to model scenarios where there is a (single) human
player interacting with Nature players.  There has been much analysis
of how to exploit the graphical structure of the ID to speed up
computation of the optimal behavior assuming full rationality, which
is quite useful for computer experiments.

More
recently, Multi-Agent Influence Diagrams (MAIDs~\cite{komi03}) and
their variants like semi-net-form games~\cite{lewo12b,babe14,lewo12a} and Interactive POMDP's~\cite{doshi2009graphical} have
extended IDs to model games involving arbitrary numbers of players. As
such, the work on MAIDs can be viewed as an attempt to create a new
game theory representation of multi-stage games based on Bayes nets,
in addition to strategic form and extensive form representations.

Compared to these older representations, typically MAIDs more clearly express the
interaction structure of what information is available to each
player in each possible state.\footnote{In a MAID a player has information at a
decision node $A$ about some state of Nature $X$ if there is a
directed edge from $X$ to $A$.} They also very often require far less notation
than those other representations to fully specify a given game. Thus, we consider them as a natural
starting point when studying the role of information in games.

A MAID is defined as follows:
\begin{definition}
An $n$-player \textbf{MAID} is defined as a tuple $(G, \{\set{X}_v\},
\{p(x_v \mid x_{pa(v)})\}, \{u_i\})$ of the following elements:
\begin{itemize}
\item A directed acyclic graph $G = (\set{V}, \set{E})$ where $\set{V}
  = \set{D} \cup \set{N}$ is partitioned
  into
  \begin{itemize}
  \item a set of \emph{Nature} or \emph{chance} nodes $\set{N}$ and
  \item a set of decision nodes $\set{D}$ which is further
    partitioned into $n$ sets of \emph{decision} nodes $\set{D}_i$, one for
    each player $i = 1,\ldots,n$,
  \end{itemize}
\item a set  $\set{X}_v$ of states for each $v \in \set{V}$,
\item a conditional probability distribution $p(x_v \mid x_{pa(v)})$
  for each Nature node $v \in \set{N}$, where $pa(v) = \{ u : (u,v)
  \in \set{E}\}$ denotes the parents of $v$ and $x_{pa(v)}$ is their
  joint state.
\item a family of utility functions $\{u_i: \cart_{v \in \set{V}}
  \set{X}_v \to \R \}_{i = 1,\ldots,n}$.
\end{itemize}
\end{definition}
\noindent In particular, as mentioned above, a one-person MAID is an \textbf{influence diagram} (ID~\cite{howard2005influence}). 

In the following, the states $x_v \in \set{X}_v$ of a decision node $v \in
\set{D}$ will usually be called {\em actions} or \emph{moves}, and sometimes will be denoted by $a_v \in \set{X}_v$.
We adopt the convention that ``$p(x_v \mid x_{pa(v)})$'' means $p(x_v)$ if $v$ is
a root node, so that $pa(v)$ is empty. 
We  write elements of $\set{X}$ as $x$. We define $\set{X}_{\set{A}}
\equiv \prod_{v \in \set{A}} \set{X}_{v}$ for any $\set{A} \subseteq
\set{V}$, with elements of $\set{X}_{\set{A}}$ written as
$x_{\set{A}}$.  So in particular, $\set{X}_\set{D} \equiv \prod_{v \in
  \set{D}} \set{X}_v$, and $\set{X}_\set{N} \equiv \prod_{v \in
  \set{N}} \set{X}_v$, and we write elements of these
sets as $x_\set{D}$ (or $a_\set{D}$) and $x_{\set{N}}$, respectively.

We will sometimes write an $n$-player MAID as $(G, \set{X}, p,
\{u_i\})$, with the decompositions of those variables and associations
among them implicit.  (So for example the decomposition of $\set{G}$
in terms of $\set{E}$ and a set of nodes $[\cup_{i = 1,\ldots,n}
  \set{D}_i] \cup \set{N}$ will sometimes be implicit.)

A \emph{solution concept} is a map from any MAID $(G,
\set{X}, p, \{u_i\})$ to a set of conditional distributions
$\{\sigma_i(x_v \mid x_{pa(v)}) : v \in \set{D}_i, i =
1,\ldots,n\}$. We refer to the set of distributions $\{\sigma_i(x_v
\mid x_{pa(v)}) : v \in \set{D}_i\}$ for any particular player $i$ as
that player's \emph{strategy}. We refer to the full set
$\{\sigma_i(x_v \mid x_{pa(v)}) : v \in \set{D}_i, i = 1,\ldots,n\}$
as the \emph{strategy profile}.  We sometimes write $\sigma_v$ for a
$v \in \set{D}_i$ to refer to one distribution in a player's strategy
and use $\sigma$ to refer to a strategy profile.

The intuition is that each player can set the conditional distribution
at each of their decision nodes, but is not able to introduce
arbitrary dependencies between actions at different decision nodes. In
the terminology of game theory, this is called the agent
representation. The rule for how the set of all players jointly set
the strategy profile is the solution concept.
  
In addition, we allow the solution concept to depend on
parameters. Typically there will be one set of parameters associated
with each player. When that is the case we sometimes write the
strategy of each player $i$ that is produced by the solution concept
as $\sigma_i(a_v \mid x_{pa(v)} ; \vec{\beta})$ where $\vec{\beta}$ is
the set of parameters that specify how $\sigma_i$ was determined via
the solution concept.

The combination of a MAID $(G, \set{X}, p, \{u_i\})$ and a solution
concept specifies the conditional distributions at all the nodes of
the DAG $G$. Accordingly it specifies a joint probability distribution
\begin{eqnarray}
p(x_{\set{V}}) &=& \prod_{v \in \set{N}} p(x_v \mid x_{pa(v)}) \prod_{i = 1,\ldots,n}
  \prod_{v \in \set{D}_i} \sigma_i(a_v \mid x_{pa(v)}) \\
  & = & \prod_{v \in \set{V}} p(x_v \mid x_{pa(v)})
\end{eqnarray}
where we abuse notation and denote $\sigma_i(a_v \mid
x_{pa(v)})$ by $p(x_v \mid x_{pa(v)})$ whenever $v \in
\set{D}_i$.

In the usual way, once we have such a joint distribution over all
variables, we have fully defined the joint distribution over $\set{X}$
and therefore defined conditional probabilities of the states of one
subset of the nodes in the MAID, $\set{A}$, given the states of
another subset of the nodes, $\set{B}$:
\begin{eqnarray}
p(x_{\set{A}} \mid x_\set{B}) &=& \frac{p(x_{\set{A}}, x_{\set{B}})} {p(x_\set{B})} \nonumber \\
 &=& \frac{ \sum_{x_{\set{V} \setminus (\set{A} \cup \set{B})}} \; p(x_{\set{A} \cup \set{B}}, x_{\set{V} \setminus (\set{A} \cup \set{B}) }) }
 { \sum_{x_{\set{V} \setminus \set{B}}} \; p(x_{\set{B}}, x_{\set{V} \setminus  \set{B}) }) }
\end{eqnarray}
Similarly the combination of a MAID and a solution concept fully
defines the conditional value of a scalar-valued function of all
variables in the MAID, given the values of some other variables in the
MAID. In particular, the conditional expected utilities are given by
\begin{equation}
\E(u_i \mid x_{\set{A}} ) = \sum_{x_{\set{V} \setminus \set{A}}} \; p(x_{\set{V} \setminus \set{A}}  \mid x_{ \set{A}} ) u_i(x_{\set{V} \setminus \set{A}}, x_{ \set{A}} )
\label{eq:CondExp}
\end{equation}

We will sometimes use the term ``information structure'' to refer to
the graph of a MAID and the conditional distributions at its Nature
nodes. (Note that this is a slightly different use of the term from
that used in extensive form games.)  In order to study the effect of
changes to the information structure of a MAID, we will assume that
the probability distributions at the Nature nodes are parametrized by
a set of parameters $\vec{\theta}$, i.e., $p_v(x_v \mid x_{pa(v)};
\vec{\theta})$. We are interested in how infinitesimal changes to
$\vec{\theta}$ (and other parameters of the MAID like $\vec{\beta}$)
affect $p(x_{\set{V}})$, expected utilities, mutual information among nodes in
the MAID, etc.

\subsection{Quantal response equilibria of MAIDs}

A solution concept for a game specifies how the actions of the players
are chosen. In our framework, it is not crucial which solution concept
is used (so long as the strategy profile of the
players at any $\vec{\theta}$ is differentiable in the interior of $\Theta$). For convenience, we choose the (logit) quantal
response equilibrium (QRE) \cite{mcpa98}, a popular model for bounded
rationality.{\footnote{In addition, the QRE can be derived from information-theoretic principles~\cite{woha12},
although we do not exploit that property of QREs here.}}  Under a QRE, each player $i$
does not necessarily make the best possible move, but instead chooses
his actions at the decision node $v \in \set{D}_i$ from a Boltzmann
distribution over his move-conditional expected utilities: \be
\sigma_i(a_v \mid x_{pa(v)}) = \frac{1}{Z_i(x_{pa(v)})} e^{\beta_i
  \E(u_i | a_v, x_{pa(v)})}
\label{eq:QRE}
\ee for all $a_v \in \set{X}_v$ and $x_{pa(v)} \in \cart_{u \in pa(v)}
\set{X}_u$.  In this expression $Z_i(x_{pa(v)}) = \sum_{a \in
  \set{X}_{pa(v)}} e^{\beta_i \E(u_i | a, x_{pa(v)})}$ is a
normalization constant, $\E(u_i | a_v, x_{pa(v)})$ denotes the
conditional expected utility as defined in \equ{CondExp} and $\beta_i$
is a parameter specifying the ``rationality'' of player
$i$.

This interpretation is based on the observation that a player with
$\beta = 0$ will choose her actions uniformly at random, whereas
$\beta \to \infty$ will choose the action(s) with highest expected
utility, i.e., corresponds to the rational action choice. Thus, it
includes the Nash equilibrium where each player maximizes expected
utility as a boundary case. 
  
As shorthand, we denote the (unconditional) expected utility of player
$i$ at some equilibrium $\{\sigma_i\}_{i = 1,\ldots,n}$,
$\E{\{\sigma_i\}}_{i = 1,\ldots,n}(u_i)$, by $V_i$.
  
By using implicit differentiation, it is straight forward to compute
partial derivatives governed by the QRE solution concept (see
appendix~\ref{app:partial_derivative} for details). In particular, we
can compute $\frac{\partial}{\partial \theta^i}V_i$ for the QRE. This
is required for the definition of differential value of information
central to the analysis below.

\section{Differential value} 
\label{sec:information_economics}

\label{sec:DiffValue}

Say that we fix all distributions at Nature nodes in a MAID
except for some particular Nature-specified information channel $p(x_v \mid x_{pa(v)})$,
and are interested in the differential value of mutual information through that channel.
In general, the expected utility of a player $i$ in this MAID is not a single-valued function of 
the mutual information in that channel $I(X_v ; X_{pa(v)})$. There are two reasons for
this. First, the same value of $I(X_v ;
X_{pa(v)})$ can occur for different conditional distributions $p(x_v
\mid x_{pa(v)})$, and therefore that value of $I(X_v ; X_{pa(v)})$ can
correspond to multiple values of expected utility in general. Second, as discussed above, even
if we fix the distribution $p(x_v \mid x_{pa(v)})$,
there might be several equilibria (strategy profiles) all of which solve
the QRE equations but correspond to different distributions at the
decision nodes of the MAID. 

Evidently then, if $v$ is a chance node in a MAID and
$i$ a player in that MAID, there is no unambiguously defined ``differential value to $i$ of the mutual
information'' in the channel from $pa(v)$ to $v$. We can only talk about differential value of
mutual information \emph{at a particular joint distribution of the MAID},
a distribution that both specifies a particular equilibrium of player strategies on one particular equilibrium branch, and that
specifies one particular channel distribution $p(x_v \mid x_{pa(v)})$.
Once we make such a specification,
we can analyze several aspects of the associated value of mutual information.

Now assume that the channel $p(x_v | x_{pa(v)}; \vec{\theta})$ is
parameterized by $d$ parameters $\vec{\theta} = (\theta^1, \ldots,
\theta^d)$.\footnote{In the following, we make use of several
  conventions used in differential geometry (for a thorough introduction
  to differential geometry we refer the reader to a standard monograph
  \cite{Jost11}):
  \begin{itemize}
  \item {\em Covariant and contravariant vectors}: Vectors are called
    co- or contravariant depending on how their coordinate
    representations change under coordinate transformations. The
    convention is to use upper indices for contravariant and lower
    indices for covariant vectors.

    As an example consider a function $f(\vec{\theta})$ of the game
    parameters $\vec{\theta} = (\theta^1, \ldots, \theta^d)$. Then,
    $\vec{\theta}$ is contravariant, while the partial derivatives
    $(\frac{\partial f}{\partial \theta^1}, \ldots, \frac{\partial
      f}{\partial \theta^d})$ represents a covariant vector.
  \item {\em Einstein summation convention}: \bel{B1} v^i
    w_i:=\sum_{i=1}^d v^i w_i \qe The content of this convention is
    that a summation sign is omitted when the same index occurs twice
    in a product, once as an upper and once as a lower index.
  \item {\em Metric tensor}: The function of the metric tensor is to
    provide a (Euclidean) scalar product of tangent vectors,
    i.e. $\langle \vec{v},\vec{w} \rangle_g = g_{ij} v^i w^j$. 
    
    In particular, the gradient of a function $f(\vec{\theta})$ has
    contravariant coordinates $(\grad f)^i = g^{ij} \frac{\partial
      f}{\partial \theta^j}$ where $g^{ij}$ denotes the inverse of the
    metric tensor $g_{ij}$.
  \end{itemize}} Then, as motivated in
section~\ref{sec:generalize_marg_util} we want to study how
infinitesimal changes to the parameters effect the value of the
player. Consider a change in the direction $\delta \vec{\theta}$.  To
first order, this changes the value by $\langle \grad V, \delta
\vec{\theta} \rangle$ and correspondingly we can quantify the
differential value of this direction as the change in value per unit
length of $\delta \vec{\theta}$:
\begin{definition}
  Let $\delta \vec{\theta} \in \R^d$ be a contravariant vector. The
  \textbf{(differential) value of direction $\delta \vec{\theta}$} at
  $\vec{\theta}$ is defined as
\begin{eqnarray}
\set{V}_{\delta \vec{\theta}}(\vec{\theta}) &\equiv& \frac{ \langle \grad(V), \delta \vec{\theta}\rangle  }{ | \delta \vec{\theta} | } \nonumber
  \end{eqnarray}
\label{def:val_direction}
\end{definition}
\noindent This is the length of the projection of $\grad(V)$
in the unit direction $\delta \bftheta$. Intuitively, the direction $\delta
\bftheta$ is valuable to the player to the extent that $V$ increases in this
direction. This is what the value of direction $\delta \bftheta$ quantifies. (Note that
when $V$ decreases in this direction, the value is
negative.)

Using the definitions of gradients and scalar products we can expand
\begin{equation}
\set{V}_{\delta \vec{\theta}}(\vec{\theta}) = \frac{ \langle \grad(V), \delta \vec{\theta}\rangle  }{ | \delta \vec{\theta} | }
= \frac{  \frac{\partial}{\partial \theta^k} V 
  g(\vec{\theta})^{ki} g(\vec{\theta})_{il}
 \delta \theta^l  } {\sqrt{\delta \theta^k  g(\vec{\theta})_{kl}  \delta \theta^l}}
= \frac{  \frac{\partial}{\partial \theta^k} V 
  \delta \theta^k  } {\sqrt{\delta \theta^k  g(\vec{\theta})_{kl}  \delta \theta^l}}
\label{eq:val_direction}
\end{equation}
The absence of the metric in the numerator in
Eq.~\eqref{eq:val_direction} reflects the fact that the vector of
partial derivatives $\frac{\partial}{\partial \theta^k} V$ is a covariant vector, whereas $\delta \vec{\theta}$ is
a contravariant vector.

As discussed above and elaborated below, one important class of directions $\delta \vec{\theta}$ at a given game vector $\vec{\theta}$ are
gradients of functions $f(\vec{\theta})$ evaluated at $\vec{\theta}$, e.g., the direction $\grad I(X ; S)$.
However even when the direction $\delta \vec{\theta}$ we are considering is not
parallel to the gradient of an information-theoretic function $f(\vec{\theta})$ like mutual information, capacity or player
rationality, we will often be
concerned with quantifying the ``value" of such a $f$ in that direction $\delta \vec{\theta}$.  We can do this
with the following definition, related to the definition of differential value of a direction.
\begin{definition}
    Let $\delta \vec{\theta} \in \R^d$ be a contravariant vector. The
  \textbf{differential value of $f$ (in direction $\delta \vec{\theta})$} at
  $\vec{\theta}$ is defined as:
\begin{eqnarray}
  \set{V}_{f, \delta \vec{\theta}} & \equiv & \frac{\frac{ \langle \grad(V),
      \delta \vec{\theta}\rangle } { |\delta \vec{\theta}| }}{\frac { \langle \grad(f),
      \delta \vec{\theta}\rangle } { |\delta \vec{\theta}| }}
  = \frac{ \langle \grad(V),
    \delta \vec{\theta}\rangle }{ \langle \grad(f),
    \delta \vec{\theta}\rangle } 
  = \frac{ \frac{\partial}{\partial \theta^i} V \delta \theta^i}{ \frac{\partial}{\partial \theta^i} f \delta \theta^i}\nonumber
\end{eqnarray}
\label{def:val_f_in_direction}
\end{definition}
\noindent This quantity considers the relation between how $V$ and $f$ change when moving in the
direction $\delta \bftheta$.  If the sign of the differential value of
$f$ in direction $\delta \vec{\theta}$ at $\vec{\theta}$ is positive, then an
infinitesimal step in in direction $\delta \vec{\theta}$ at $\vec{\theta}$ will
either increase both $V$ and $f$ or decrease both of them. If instead
the sign is negative, then such a step will have opposite effects on
$V$ and $f$. The size of the differential value of $f$ in direction
$\delta \vec{\theta}$ at $\vec{\theta}$ gives the rate of change in $V$ per unit
of $f$, for movement in that direction. 
Note that $\set{V}_{f, \delta \vec{\theta}}$ is independent of the metric because both numerator and denominator are. 

Finally, we define the differential value of information of $f$
without explicitly specifying a direction. To do this we note that the
function $f(\theta)$ itself gives rise to a preferred direction,
namely the direction of steepest ascent of $f(\theta)$, i.e., the
direction of $\grad f$.  Thus, using $\delta \vec{\theta} = \grad f$
in definition~\ref{def:val_f_in_direction} we obtain:
\begin{definition}
  The \textbf{differential value of $f$} at $\vec{\theta}$ is defined
  as:
  \begin{equation}
    \set{V}_f \equiv \frac{\langle \grad V, \grad f\rangle}{\langle \grad V, \grad f\rangle}
    = \frac{\langle \grad V, \grad f\rangle}{|| \grad f ||^2}
  \end{equation}
\end{definition}
While this quantity does have units of value per unit of $f$, as
required by the interpretation as value of information, it is not
independent of the metric. Here, the numerator $\langle \grad V, \grad
f\rangle = \frac{\partial}{\partial \theta^i} V g^{ij}
\frac{\partial}{\partial \theta^j} f$ as well as the denominator $||
\grad f ||^2 = \frac{\partial}{\partial \theta^i} f g^{ij}
\frac{\partial}{\partial \theta^j} f$ are metric dependent. For this
reason, in the following we will focus on the metric independent
differential value of information in a direction.

\section{Properties of differential value}
\label{sec:properties}

We now present some general results concerning value of a function 
$f : \Theta \rightarrow \R$, in particular conditions for negative values. 
Throughout this section, we  assume  that both $f$ and $V$ are twice continuously differentiable. 
In addition, note that when we randomly and independently choose (the directions of) $n\le d$ vectors in $\R^d$, they are linearly independent with probability 1. That means, generically $n\le d$ nonzero vectors span an $n$-dimensional linear subspace. In the sequel, we shall often implicitly assume that we are in such a generic situation and refrain from discussing nongeneric situations, that is, situations with additional linear dependencies among the vectors involved. 

Note that generic here refers to the parametrization $\vec{\theta}$ in
that changes are not restricted to a lower-dimensional
subspace. Unless noted otherwise, we do {\em not} make any assumptions
about the utility function and our results are valid even in the
presence of non-generic symmetries, e.g. zero-sum, of the utility.

\subsection{Preliminary definitions}

To begin we introduce some particular convex cones (see
appendix~\ref{app:ConicHull} for the relevant definitions) that we
will use in our analysis of differential value of $f$ for a single
player:
\begin{definition}
Define 
four cones
 \begin{eqnarray*}
C_{++}(\vec{\theta}) &\equiv& \{\delta \vec{\theta} : \langle \grad(V), \delta \vec{\theta}\rangle   >  0,  \langle \grad(f), \delta \vec{\theta}\rangle >0\} \\
C_{+-}(\vec{\theta}) &\equiv& \{\delta \vec{\theta} : \langle \grad(V), \delta \vec{\theta}\rangle   >  0,  \langle \grad(f), \delta \vec{\theta}\rangle <0\} \\
C_{-+}(\vec{\theta}) &\equiv& \{\delta \vec{\theta} : \langle \grad(V), \delta \vec{\theta}\rangle   <  0,  \langle \grad(f), \delta \vec{\theta}\rangle >0\} \\
C_{--}(\vec{\theta}) &\equiv& \{\delta \vec{\theta} : \langle \grad(V), \delta \vec{\theta}\rangle   <  0,  \langle \grad(f), \delta \vec{\theta}\rangle <0\} .
\end{eqnarray*}
and also define
\begin{eqnarray*}
C_\pm(\vec{\theta}) &\equiv & C_{+-}(\vec{\theta}) \cup C_{-+}(\vec{\theta}).
\end{eqnarray*}
\end{definition}
\noindent 
So there are two hyperplanes, $\{\delta \vec{\theta} :  \langle \grad(V), \delta \vec{\theta}\rangle= 0\}$
and $ \{\delta \vec{\theta} :  \langle \grad(f), \delta \vec{\theta}\rangle= 0\}$, 
that separate the tangent space at $\vec{\theta}$ into the four disjoint convex cones $C_{++}(\vec{\theta})$,
$C_{+-}(\vec{\theta})$, $C_{-+}(\vec{\theta})$, $C_{--}(\vec{\theta})$. 
These cones are convex and pointed. In fact, each of them is contained in some open halfspace. 

By the definition of
the differential value of $f$ in the direction $\delta \vec{\theta}$, it is negative
for all $\delta \vec{\theta}$ in either $C_{+-}(\vec{\theta})$ or $C_{-+}(\vec{\theta}) = -C_{+-}(\vec{\theta})$, that is, in $C_\pm(\vec{\theta})$.

\subsection{Geometry of negative value of information}

In principle, either the pair of cones $C_{++}$ and $C_{--}$ or the
pair of cones $C_{+-}$ and $C_{-+}$ could be empty.  That would mean
that either \emph{all} directions $\delta \vec{\theta}$ have positive
value of $f$, or all have negative value of $f$, respectively. We now
observe that the latter pair of cones is nonempty --- so there are
directions $\delta \vec{\theta}$ in which the value of $f$ is negative
--- iff the gradient of $V$ and $f$ are not parallel.

\begin{proposition}
\label{prop_nonalign}
Assume that $\grad(V)$ and $\grad(f)$ are both
nonzero at $\vec{\theta}$. Then $C_{+-}(\vec{\theta)}$ and $C_{-+}(\vec{\theta)}$ are nonempty iff
\begin{equation}\label{2011}
  \langle \grad V(\vec{\theta}), \grad f(\vec{\theta}) \rangle  < |\grad V(\vec{\theta})| |\grad f(\vec{\theta})|
 \end{equation}
 \label{prop:neg_val_info}
\end{proposition}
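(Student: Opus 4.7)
The plan is to identify the inequality~(\ref{2011}) as the strict form of the Cauchy--Schwarz inequality and then exhibit a single concrete direction in each of the two cones. Since both gradients are nonzero by hypothesis, Cauchy--Schwarz together with its equality case gives $\langle \grad V(\vec\theta), \grad f(\vec\theta)\rangle = |\grad V(\vec\theta)|\,|\grad f(\vec\theta)|$ iff $\grad f(\vec\theta) = \lambda \grad V(\vec\theta)$ for some $\lambda > 0$. Hence condition~(\ref{2011}) is equivalent to the statement that $\grad V$ and $\grad f$ at $\vec\theta$ are \emph{not} positive scalar multiples of one another. This single reformulation drives everything else.

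For the ``only if'' direction I will argue the contrapositive. If $\grad f = \lambda \grad V$ with $\lambda > 0$, then $\langle \grad f, \delta\vec\theta\rangle = \lambda \langle \grad V, \delta\vec\theta\rangle$ for every $\delta\vec\theta$, so the two inner products always share the same sign (or both vanish). This forces $C_{+-}(\vec\theta) = C_{-+}(\vec\theta) = \emptyset$, giving the failure of nonemptiness whenever~(\ref{2011}) fails.

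For the ``if'' direction I will construct an explicit witness. Let $u = \grad V(\vec\theta)/|\grad V(\vec\theta)|$ and $w = \grad f(\vec\theta)/|\grad f(\vec\theta)|$. Inequality~(\ref{2011}) reads $\langle u, w\rangle < 1$. Setting $\delta\vec\theta = u - w$, I obtain
\begin{equation*}
\langle \grad V, \delta\vec\theta\rangle = |\grad V|\,(1 - \langle u, w\rangle) > 0, \qquad \langle \grad f, \delta\vec\theta\rangle = |\grad f|\,(\langle u, w\rangle - 1) < 0,
\end{equation*}
so $\delta\vec\theta \in C_{+-}(\vec\theta)$; the vector $-\delta\vec\theta$ then lies in $C_{-+}(\vec\theta)$.

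The whole argument is a one-line geometric observation once the role of Cauchy--Schwarz is recognized, so no serious obstacle is expected. In particular, nothing in the proof uses that $V$ is an expected utility or that $f$ is an information-theoretic functional, nor does it invoke second-order information about either function; only first-order behavior at $\vec\theta$ enters, consistent with the proposition being a purely local, gradient-level statement.
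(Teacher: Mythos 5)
Your proof is correct, and it follows the paper's opening move exactly: both arguments use the Cauchy--Schwarz inequality together with its equality case to translate inequality~(\ref{2011}) into the statement that $\grad V(\vec{\theta})$ and $\grad f(\vec{\theta})$ are not positive multiples of one another. Where you diverge is in how existence of a direction in $C_{+-}(\vec{\theta})$ is then obtained: the paper invokes its conic-hull machinery (Lemma~\ref{lemma:conic} and the duality facts from the appendix, i.e.\ non-collinearity makes $Con(\{\grad V,\grad f\})$ pointed, which yields a vector $\vec{w}$ with $\langle \grad V,\vec{w}\rangle>0$ and $\langle \grad f,\vec{w}\rangle<0$), whereas you write down an explicit witness, $\delta\vec{\theta}=u-w$ with $u,w$ the normalized gradients, and verify the two strict sign conditions by direct computation; you also handle the converse by an explicit contrapositive (positive proportionality forces the two inner products to share a sign), which the paper absorbs into its chain of equivalences. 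Your route is more elementary and entirely self-contained --- it needs no appendix lemma and works verbatim for any inner product, since the normalization is taken with respect to the same metric --- while the paper's route has the advantage of setting up the cone/dual-cone language that it reuses for the multi-player Pareto result (Prop.~\ref{prop:pareto_neg_val_f}), where an explicit two-vector construction would no longer suffice. Both are valid proofs of the proposition as stated.
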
 

\begin{proof}
By the Cauchy-Schwarz inequality $\langle \grad V(\vec{\theta}), \grad f(\vec{\theta}) \rangle  \leq |\grad V(\vec{\theta})| |\grad f(\vec{\theta})|$
 with equality iff and only if $\grad V$ and $\grad f$ are positive multiples o each other. Thus, Eq. \rf{2011} means that
the two vectors $\grad(V)$ and $\grad(f)$ are not positively collinear.
It follows from  Lemma \ref{lemma:conic} (see appendix~\ref{app:ConicHull}) that  two (nonzero) vectors $\vec{v}_1,\vec{v}_2$ are not positively collinear iff
$Con(\{\vec{v}_1, \vec{v}_2\})$ is pointed, i.e., iff there are points in neither $Con(\{\vec{v}_1, \vec{v}_2\})$ nor its dual. That
in turn is equivalent to there being a third vector $\vec{w}$ with
\be
\langle \vec{v}_1, \vec{w}\rangle >0, \quad \langle \vec{v}_2, \vec{w}\rangle <0.
\qe
With $\vec{v}_1 =\grad(V), \vec{v}_2 =\grad(f)$, this means that Eq. \rf{2011} implies
that $C_{+-}\neq \emptyset$, and therefore
 $C_{-+}=-C_{+-}\neq \emptyset$. 
\end{proof}
\noindent We emphasize that this result (and other results below) are not predicated on our
use of the QRE or an information-theoretic definition of $f$. It holds even
for other choices of the solution concept and / or definition of ``amount of information" $f$. In addition, the 
requirement in Prop.~\ref{prop_nonalign} that $\vec{\theta}$ be in the interior of $\Theta$ is actually quite weak. This is because often
if  a given MAID of interest is represented by a $\theta$ on the border of $\Theta$ in one parametrization of the set of MAIDs, 
under a different parameterization 
the exact same MAID will correspond to a parameter $\theta$ in the interior of $\Theta$.

To illustrate Prop.~\ref{prop:neg_val_info}, consider a situation
where $\grad V$ and $\grad f$ are not parallel at $\vec{\theta}$, so
that $\grad (f) \not \propto \grad (V)$.  Suppose now, the player is
allowed to add any vector to the current $\vec{\theta}$ that has a
given (infinitesimal) magnitude. Then, in order to increase utility,
she would not choose the added infinitesimal vector to be parallel to
$\grad (f)$, i.e., she would prefer to use some of that added vector
to improve other aspects of the game's parameter vector besides
increasing $f$. Intuitively, so long as they value anything other than
$f$, we would generically expect there to be directions that have
negative value of $f$.  This holds for any player individually,
independent of possible relations, e.g. zero-sum, between the
utilities of different player.  Symmetrically, we would expect
directions that have positive value of $f$:

\begin{corollary}
\label{corr:always_good_directions}
Assuming that $\grad(V)$ and $\grad(f)$ are both
nonzero at $\vec{\theta}$, 
\begin{equation}\label{201}
  | \langle \grad V(\vec{\theta}), \grad f(\vec{\theta}) \rangle | < |\grad V(\vec{\theta})| |\grad f(\vec{\theta})|
 \end{equation}
implies that $C_{++}(\vec{\theta)}$ and $C_{--}(\vec{\theta)}$ are both nonempty.
\end{corollary}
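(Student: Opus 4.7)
The plan is to derive Corollary \ref{corr:always_good_directions} directly from Proposition \ref{prop_nonalign} by applying that proposition to the pair $(V, -f)$ in place of $(V, f)$. The hypothesis $|\langle \grad V, \grad f\rangle| < |\grad V|\,|\grad f|$ is, via Cauchy--Schwarz, precisely the statement that $\grad V$ and $\grad f$ are not scalar multiples of one another in \emph{either} sign, i.e., neither positively nor negatively collinear. Proposition \ref{prop_nonalign} as stated handles the positive-collinearity case (producing $C_{+-}, C_{-+}\ne\emptyset$); flipping the sign of $f$ will symmetrically handle the negative-collinearity case and deliver $C_{++}, C_{--}\ne\emptyset$.

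Concretely, I would first unpack the hypothesis as the two separate inequalities $\langle \grad V, \grad f\rangle < |\grad V|\,|\grad f|$ and $-\langle \grad V, \grad f\rangle < |\grad V|\,|\grad f|$. The second of these can be rewritten as
\[
\langle \grad V, \grad(-f)\rangle \;<\; |\grad V|\,|\grad(-f)|,
\]
which is exactly the hypothesis of Proposition \ref{prop_nonalign} with $f$ replaced by $-f$. Applying that proposition to $-f$ yields that the cones
\[
\tilde C_{+-} = \{\delta\vec{\theta} : \langle \grad V, \delta\vec{\theta}\rangle > 0,\ \langle \grad(-f), \delta\vec{\theta}\rangle < 0\}
\]
and $\tilde C_{-+}$ (defined analogously with opposite signs) are nonempty.

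Finally, I would observe that the condition $\langle \grad(-f), \delta\vec{\theta}\rangle < 0$ is equivalent to $\langle \grad f, \delta\vec{\theta}\rangle > 0$, so $\tilde C_{+-}$ coincides with $C_{++}(\vec{\theta})$ and $\tilde C_{-+}$ coincides with $C_{--}(\vec{\theta})$. This gives the claim. The argument is almost entirely mechanical once the substitution $f\mapsto -f$ is spotted; the only subtlety, and the step I would double-check, is making sure the two halves of the Cauchy--Schwarz inequality correspond cleanly to the two collinearity conditions ruled out by Lemma \ref{lemma:conic}, so that Proposition \ref{prop_nonalign} can be invoked without any hidden assumption on the sign of $\langle \grad V, \grad f\rangle$.
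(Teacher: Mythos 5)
Your proposal is correct and is essentially the paper's own argument: the paper likewise sets $g\equiv -f$, notes that the hypothesis \eqref{201} implies $\langle \grad V,\grad g\rangle < |\grad V|\,|\grad g|$, invokes Prop.~\ref{prop:neg_val_info} for the pair $(V,g)$, and translates $C^{g}_{+-},C^{g}_{-+}$ back into $C^{f}_{++},C^{f}_{--}$ via the sign flip $\langle\grad(-f),\delta\vec{\theta}\rangle<0 \Leftrightarrow \langle\grad f,\delta\vec{\theta}\rangle>0$. Your only cosmetic difference is splitting the absolute-value hypothesis into two inequalities and using just the one needed, which is exactly the implication step in the paper's chain.
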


\begin{proof}
Define $g(\vec{\theta}) \equiv -f(\vec{\theta})$, and write $C^g$ or $C^f$ to indicate whether
we are considering spaces defined by $V$ and $g$ or by $V$ and $f$, respectively.
\begin{eqnarray*}
  | \langle \grad V(\vec{\theta}), \grad f(\vec{\theta}) \rangle | & < & |\grad V(\vec{\theta})| |\grad f(\vec{\theta})| \\
\Leftrightarrow | \langle \grad V(\vec{\theta}), \grad g(\vec{\theta}) \rangle | & < & |\grad V(\vec{\theta})| |\grad g(\vec{\theta})| \\
\Rightarrow \langle \grad V(\vec{\theta}), \grad g(\vec{\theta}) \rangle & < & |\grad V(\vec{\theta})| |\grad g(\vec{\theta})| \\
\Leftrightarrow \quad C^g_{-+}(\vec{\theta)} &\mbox{ and }& C^g_{+-}(\vec{\theta)} \mbox{   are both non-empty} \\
\Leftrightarrow \quad C^f_{--}(\vec{\theta)} &\mbox{ and }& C^f_{++}(\vec{\theta)} \mbox{   are both non-empty} 
\end{eqnarray*}
where Prop.~\ref{prop:neg_val_info} is used to establish the second to last equality.
\end{proof}
\noindent Note that the converse to
Coroll.~\ref{corr:always_good_directions} does not hold. A simple
counter-example is where $\grad(V)(\vec{\theta}) = \grad
f(\vec{\theta})$, for which both $C_{++}(\vec{\theta)}$ and
$C_{--}(\vec{\theta)}$ are nonempty, but $| \langle \grad
V(\vec{\theta}), \grad f(\vec{\theta}) \rangle | = |\grad
V(\vec{\theta})| |\grad f(\vec{\theta})|$

Coroll.~\ref{corr:always_good_directions} means that so long as 
$| \langle \grad V(\vec{\theta}), \grad f(\vec{\theta}) \rangle | < |\grad V(\vec{\theta})| |\grad f(\vec{\theta})|$, there are
directions $\delta \vec{\theta}$ with positive value of $f$. This 
has interesting implications
for the analysis of value of information in the Braess' paradox and Cournot example of negative value
of information, as discussed below in Sec.~\ref{sec:more_examples}.  It 
also means that even if for some particular $f$ of interest one intuitively expects that increasing $f$ should reduce expected utility, generically
there will be infinitesimal changes to $\vec{\theta}$ that will increase $f$ but increase expected utility.
An illustration of this is also discussed in the ``negative value of utility" example in Sec.~\ref{sec:more_examples}.

\subsection{Genericity of negative value of information}

Consider situations where $f$ is a monotonically  increasing function of $V$ across a
compact $S \subseteq \Theta$. This means that  $f$
and $V$ have the same level hypersurfaces   across $S$
(although, of course the values of $V$ and $f$ on any such common level hypersurface will in general be different). 
The monotonicity implies that the  linear order
induced by values $f(\vec{\theta})$ relates the level hypersurfaces in the same way
that the linear order induced by values $V(\vec{\theta})$ relates those
level hypersurfaces. Say that in addition neither $\grad(f)$ nor $\grad(V)$ equals  0 anywhere
in $S$.  
So $\grad(V)$ and $\grad(f)$ are proportional to
one another throughout $S$ (although the proportionality constant may
change).{\footnote{Hypothesize that the gradients were not parallel
at some $\vec{\theta} \in S$. Since neither gradient equals zero, this would
mean that there is a direction $\delta \vec{\theta} \in S$ that has zero inner product
with $\grad f$ at $\vec{\theta$ but not with $\grad(V)$ there.  $\delta \vec{\theta}$ would lie in the level surface of $f$ going through $\vec{\theta}$
but not the level surface of $V$ going through $\vec{\theta}$. This would
contradict  the supposition that those level surfaces are coincident throughout $S$.}} 
So, by Prop.~\ref{prop:neg_val_info}, for no $\vec{\theta}$ in $S$
is there a direction $\delta \vec{\theta}$ such that $\set{V}_{f,\delta
  \vec{\theta}}(\vec{\theta}) < 0$.

In general though, level hypersurfaces will not match up throughout a region, as the condition that $\grad(V)$ and $\grad(f)$ be proportional is very restrictive and special, and so is typically violated.  When they do not, we
have points in that region that have directions with negative value of
$f$.  We now derive a criterion involving both  the gradients and the Hessians of $V$ and $f$ to identify such a mismatch.

\begin{proposition}
\label{prop_nonalign_generic}
Assume that both $f$ and $V$ are analytic at $\vec{\theta}$ with nonzero gradients, and
choose some $\epsilon >0$. Define
\begin{eqnarray*}
\nu &\equiv& \grad V(\vec{\theta}) \\
\phi &\equiv& \grad f(\vec{\theta}) \\
B_\epsilon(\vec{\theta}) &\equiv& \{\sigma: {\mathrm{dist}}(\sigma,\vec{\theta})< \epsilon\}
\end{eqnarray*}
Then
\begin{eqnarray}
\label{204}
C_{+-}(\vec{\theta}') =\; \varnothing && \!\!\!\!\!\!   \forall \; \vec{\theta}' \in B_\epsilon(\vec{\theta}) \\
\label{205}
\Rightarrow \quad \nu \propto \phi
&\mbox{and}& \quad \bigg|  \nabla \nabla(f) - \frac{|\phi|}{|\nu|} \nabla \nabla(V)  \bigg|  \frac{  \nu^i \nu_j}{ |\nu|^2} =    [\nabla \nabla(f)]^i_{j} - \frac{|\phi|}{|\nu|} [\nabla \nabla(V)]^i_j  \nonumber \\
\end{eqnarray}
where the proportionality constant in Eq.~\eqref{205} is greater than zero.
\label{lemma:problems_everywhere}
\end{proposition}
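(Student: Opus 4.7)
The plan is to deduce the result in two stages. First, extend Proposition~\ref{prop_nonalign} from a single point to the entire ball $B_\epsilon(\vec{\theta})$ to obtain pointwise positive proportionality of the two gradient fields on the ball. Second, use the fact that both vector fields are gradients of scalar functions (hence curl-free) to convert this proportionality into a constraint on the Hessians at $\vec{\theta}$.

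For the first stage, I would apply Proposition~\ref{prop_nonalign} at every $\vec{\theta}' \in B_\epsilon(\vec{\theta})$: since $C_{+-}(\vec{\theta}') = \varnothing$ by hypothesis, either one of $\grad V(\vec{\theta}')$, $\grad f(\vec{\theta}')$ vanishes or they are positive multiples of one another. By continuity of the gradients and the assumption that $\grad V(\vec{\theta}), \grad f(\vec{\theta}) \neq 0$, after shrinking $\epsilon$ we may assume that neither gradient vanishes on $B_\epsilon(\vec{\theta})$. Hence there is a positive scalar function $\mu(\vec{\theta}')$ on $B_\epsilon(\vec{\theta})$ such that $\grad f(\vec{\theta}') = \mu(\vec{\theta}') \grad V(\vec{\theta}')$. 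By analyticity of $f$ and $V$, $\mu$ is analytic (and in particular $C^1$) on the ball. Evaluating at $\vec{\theta}$ and taking norms gives $\mu(\vec{\theta}) = |\phi|/|\nu|$, which is the first conclusion $\nu \propto \phi$ with positive constant.

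For the second stage, I would differentiate the identity $f_i = \mu\, V_i$ componentwise (using lower indices for covariant partial derivatives), obtaining
\begin{equation*}
  \partial_j f_i \;=\; (\partial_j \mu)\, V_i \;+\; \mu\, \partial_j V_i .
\end{equation*}
Both $\partial_j f_i$ and $\partial_j V_i$ are symmetric in $(i,j)$ because they are Hessians of scalar functions, so antisymmetrizing over $(i,j)$ yields $(\partial_j \mu) V_i = (\partial_i \mu) V_j$. Since $\nabla V \neq 0$, this forces $\nabla \mu$ to be colinear with $\nabla V$: at $\vec{\theta}$ one has $\partial_j \mu = c\, \nu_j$ for some scalar $c$. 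Substituting back and evaluating at $\vec{\theta}$ (and raising the $i$-index with the metric where needed) yields
\begin{equation*}
 [\nabla\nabla f]^i_{\;j} \;-\; \frac{|\phi|}{|\nu|}\,[\nabla\nabla V]^i_{\;j} \;=\; c\, \nu^i\, \nu_j ,
\end{equation*}
so that the Hessian difference is a rank-one tensor aligned with the direction $\nu$. Computing the natural norm of both sides gives $|c|\cdot|\nu|^2 = \bigl|\nabla\nabla(f) - (|\phi|/|\nu|)\nabla\nabla(V)\bigr|$, which matches the scalar prefactor in Eq.~\eqref{205}.

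The main obstacle is the tensor bookkeeping in the last step: carefully tracking covariant versus contravariant indices when differentiating $f_i = \mu V_i$ and verifying that the prefactor $|\cdot|/|\nu|^2$ in the stated formula is the correct normalization of the rank-one tensor $c\,\nu^i\nu_j$. A secondary point is justifying the smoothness of $\mu$, which follows from analyticity of $f$, $V$ together with nonvanishing of $\grad V$ on a slightly smaller ball, and the extension of the pointwise positive-colinearity conclusion of Proposition~\ref{prop_nonalign} to a neighborhood.
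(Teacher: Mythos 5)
Your proposal is correct and follows essentially the same route as the paper's own proof: pointwise positive proportionality $\grad f = \mu\,\grad V$ on the ball via Proposition~\ref{prop_nonalign}, differentiation of that identity, symmetry of the Hessians to force $\grad\mu \propto \grad V$, and then taking norms to identify the scalar prefactor. Your added care about shrinking $\epsilon$ so the gradients stay nonzero and about the smoothness of $\mu$ is a small refinement of the same argument, not a different approach.
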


\begin{proof}
We start with an observation. Whereas the two vectors $\grad(V)$ and $\grad(f)$ depend on the metric, the question of whether
those  two vectors are collinear does not depend on the choice of metric. (This is in contrast to the question of whether they are orthogonal, which needs a metric.) However as these two vectors depend on the metric, and as we shall have to take derivatives and therefore cannot stay in a single tangent space, we shall have to employ some of the differential geometry concepts reviewed in the appendix.

Hypothesize that $C_{\pm}(\vec{\theta}') = \varnothing$. From Prop.~\ref{prop:neg_val_info}
this is true iff $\grad(V)(\vec{\theta}')$ and $\grad(f)(\vec{\theta}')$ are positively proportional, that is,
\bel{205b}
\grad(f)(\vec{\theta}')=\rho(\vec{\theta}')\grad(V)(\vec{\theta}')
\qe
for some positive function $\rho$. So using the definitions in the proposition, 
$\rho(\vec{\theta})= \frac{|\phi|}{|\nu|}$.
In addition, if \rf{204} holds,
we can lower both the contravariant vector field $\grad f$ and the contravariant vector field $\grad(V)$
in \rf{205b}  into covariant vector fields, and then apply the covariant derivative operator 
with respect to $\theta'$ to both sides.
If we then raise all lower indices back up to upper indices, we derive
\bel{207}
[\nabla \nabla f(\theta')]^{i,j}- \rho(\theta') [\nabla \nabla V(\theta')]^{i,j} =[\grad \rho(\theta')]^i [\grad(V)(\theta')]^j
\qe

Since both Hessians in this equation are the covariant Hessians, and therefore symmetric with respect to $i,j$, we conclude by symmetrization that 
\bel{209}
[\grad \rho(\theta')]^i [\grad(V)(\theta')]^j = [\grad \rho(\theta')]^j [\grad(V)(\theta')]^i
\qe
for all pairs $(i, j)$. So in particular, evaluating all terms at the center of the ball, $\theta' = \theta$, we see that
\be
\frac{[\grad \rho]^j} {[\grad(V)]^j}   = \frac{[\grad \rho]^i} {[\grad(V)]^i}
\qe
for all $i, j$ such that $[\grad(V)]^i \ne 0$ and $[\grad(V)]^j \ne 0$. Accordingly $[\grad \rho]^i \propto [\grad(V)]^i$
for all $i$ such that $[\grad(V)]^i \ne 0$ where the proportionality constant is independent of $i$. 
So evaluating at $\theta' = \theta$ and by the above proportionality, \rf{207} simplifies to
\bel{209b}
[\nabla \nabla f]^{i,j}- \frac{|\phi|}{|\nu|} [\nabla \nabla V]^{i,j} = k [\grad(V)]^i [\grad(V)]^j
\qe
for some scalar $k$. Taking norms of both sides, yields
\bel{210}
k=\frac{ \bigg|  \nabla \nabla f - \frac{|\phi|}{|\nu|} \nabla \nabla V  \bigg|} { |\nu|^2},
\qe
whence \rf{205}. 

\end{proof}

\noindent We will refer to the matrix with entries 
\begin{eqnarray*}
\bigg|  \nabla \nabla  f - \frac{|\phi|}{|\nu|} \nabla \nabla V  \bigg|  \frac{  \nu^i \nu_j}{ |\nu|^2} -  [\nabla \nabla f]^{i}_j + \frac{|\phi|}{|\nu|} [\nabla \nabla f]^{i}_jV 
\end{eqnarray*}
as the second-order mismatch.
{\footnote{
One might think that so long as the Hessian of $V$ at $\vec{\theta}$ is not a
constant times the Hessian of $f$ at $\vec{\theta}$, then the level curves of $V$ and $f$ 
will not match up throughout the vicinity of $\vec{\theta}$. Such mismatch in the level curves would
in turn imply that even if the gradients of $V$ and $f$ were parallel at $\theta$, there would be points infinitesimally close to $\vec{\theta}$
where the gradients of $V$ and of $f$ are not parallel. If this reasoning were valid, it would suggest
that the condition in Prop.~\ref{lemma:problems_everywhere} that the second-order mismatch is non-zero could
be tightened (by replacing that condition with the requirement that $D_{ij}F  \ne \frac{|\phi|}{|\nu|} D_{ij}V$) 
and yet the
implication given by that proposition --- that there are points where the gradients are not parallel --- would still follow.
This is not the case though: Even if the Hessian of $V$ is not a
constant times the Hessian of $f$ at $\vec{\theta}$, it is possible that the gradients of $f$
and $V$ are parallel throughout a neighborhood of $\vec{\theta}$. As a simple example, take $d = 2, V(\vec{\theta}) = \theta^1,
f(\vec{\theta}) = (1 + [\theta^1])^2$. Even though the Hessian of this $V$ is not a
constant times the Hessian of this $f$ anywhere near the origin, the gradients of $f$ and $V$ are
parallel everywhere near the origin. }}
Prop.~\ref{lemma:problems_everywhere} means that if there is nonzero second-order mismatch at some $\vec{\theta}$
then $C_{+-}(\vec{\theta})$ cannot be empty throughout a $d$-dimensional sphere centered at $\vec{\theta}$.
In other words, it means that for any $\epsilon > 0$, there is some $\vec{\theta}'$ within distance $\epsilon$ of
$\vec{\theta}$ and associated direction $\delta \vec{\theta}'$ such that at $\vec{\theta'}$ there are negative values of $f$ in direction $\delta \vec{\theta}'$.{\footnote{
However it is still possible for $C_{+-}(\vec{\theta})$ to be empty throughout a $(d-1)$-dimensional hyperplane $T$
going through $\vec{\theta}$. 
As an example, choose $d = 2, V(\vec{\theta}) = \theta^1, 
f(\vec{\theta}) = (1 + \theta^1)^2 (1 + [\theta^2]^2)$. The second order mismatch is nonzero for this $V$ and $f$
at $\vec{\theta} = (0, 0)$. (The Hessian of $f$ at the origin is twice the identity matrix, the Hessian of $V$ there
is the zero matrix, but $\nu_i \nu_j$ only has one non-zero entry, for $i = j = 1$.)  However everywhere on the line 
$\{\vec{\theta}' = t(1, 0) : t \in \R\}$ going through the origin, $\grad f(\vec{\theta}')$ and $\grad(V)(\vec{\theta}') $ are parallel.}}

In the light of this, arguably the most important issues to analyze in any
given game is  not whether at some
particular $\vec{\theta}$ there is a direction in which some particular type of information increases
but expected utility decreases. That is almost always the case. 
Rather our results redirect attention to different issues.  One such
issue is what fraction of all $\vec{\theta}$ and all possible infinitesimal
changes to $\vec{\theta}$ have negative value of information.

Another important issue arises whenever there are constraints on the
allowed directions in $\Theta$ specifying how $\vec{\theta}$ is changed. Such constraints are quite common in
the real world. As an example, there might be inequality constraints on the
information capacity of a channel, which would manifest themselves in
constraints on the allowed change in parameter space for $\vec{\theta}$ for
which those constraints are tight. Another example
of a constraint on the allowed change to the parameter vector is the stipulation
that the change must be an insertion (or removal) of a garbling node, in the
sense of Blackwell's theorem~\cite{leshno1992elementary}.
With such constraints, in general  the directions in the sets $C_{+-}(\vec{\theta)}$ and $C_{-+}(\vec{\theta)}$ arising in Prop.~\ref{prop:neg_val_info} 
may all violate the constraints, and so not be allowed.
This is why,
for example, the result of Blackwell (that there are \emph{no} garbling
nodes that one can insert just before a decision node in an ID that
increase maximal expected utility for a rational player) is consistent with our results;
the insertion of such a garbling node corresponds to a constrained
change in $\vec{\theta}$, a change that lies in a cone that does not
intersect $C_\pm(\vec{\theta})$ (see Fig.~\ref{fig:BlackwellIso}).

\subsection{Pareto negative values of information}

Recall that Prop.~\ref{prop:neg_val_info} established that 
there are directions in which the value of $f$ is negative for player $i$ iff 
$\grad f$ and $\grad V_i$ are not positively proportional.  We can generalize this to
give a condition for there to be directions in which the value of $f$
is negative for \emph{all} players, in which case we say that we have ``Pareto negative values
of information".
To make this precise, define $\set{V}^i_{f, \delta \vec{\theta}} (\vec{\theta})$ as the value
of $f$ in direction $\delta \vec{\theta}$ to player $i$ in a MAID.

\begin{proposition}
\label{prop_allneg}

Given a set of functions $V_i : \Theta \rightarrow \R$, define $C \equiv Con(\{\grad(V_i)\})$.
Assume $C^\bot \ne \varnothing$. Then
\begin{eqnarray}
\nonumber
\grad(f) &\notin&  C  \nonumber \\
\Leftrightarrow \quad \exists \; \delta {\bftheta} &:&  \mathscr{V}^i_{f, \delta {\bftheta}} ({\bftheta}') < 0 \; \forall i
\end{eqnarray}
where all $N + 1$ gradients are evaluated at ${\bftheta}$. 
\label{prop:pareto_neg_val_f}
\end{proposition}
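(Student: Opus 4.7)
The plan is to reduce the condition $\mathscr{V}^i_{f,\delta\bftheta}<0$ for every $i$ to a joint sign condition on the inner products $\langle \grad V_i, \delta\bftheta\rangle$ and $\langle \grad f, \delta\bftheta\rangle$, and then recognize the resulting condition as a separation statement between a point ($\grad f$) and a closed convex cone ($C$). This should let me apply a Farkas/separation lemma of the kind already invoked in the proof of Proposition~\ref{prop:neg_val_info} (cf.\ Lemma~\ref{lemma:conic} in the appendix).

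First I would rewrite $\mathscr{V}^i_{f,\delta\bftheta}<0$ as either $\langle \grad V_i,\delta\bftheta\rangle>0$ with $\langle \grad f,\delta\bftheta\rangle<0$, or the sign-reversed alternative. Because $\langle \grad f,\delta\bftheta\rangle$ has a single sign independent of $i$, the condition ``$\mathscr{V}^i_{f,\delta\bftheta}<0\;\forall i$'' forces one of these two cases to hold uniformly in $i$; and replacing $\delta\bftheta$ by $-\delta\bftheta$ swaps them. So without loss of generality I look for a $\delta\bftheta$ with
\begin{equation*}
\langle \grad V_i,\delta\bftheta\rangle>0 \ \forall i, \qquad \langle \grad f,\delta\bftheta\rangle<0.
\end{equation*}
The first family of inequalities says $\delta\bftheta$ lies in the interior of the dual cone $C^\bot$ (which is non-empty by hypothesis), while the second says $\delta\bftheta$ strictly separates $\grad f$ from $C$.

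For the reverse implication ($\grad f \in C \Rightarrow$ no such direction exists), I would write $\grad f = \sum_i \lambda_i \grad V_i$ with $\lambda_i\ge 0$ and observe that for any $\delta\bftheta$ with $\langle \grad V_i,\delta\bftheta\rangle>0$ for all $i$, the inner product $\langle \grad f,\delta\bftheta\rangle$ is a non-negative combination of positive numbers, hence non-negative; the sign-flipped case is symmetric. So no $\delta\bftheta$ can put all ratios into the negative orthant simultaneously.

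For the forward implication I would apply the standard separation theorem for a closed convex cone and an external point: $\grad f \notin C$ yields some $\delta\bftheta_0$ with $\langle u,\delta\bftheta_0\rangle\ge 0$ for all $u\in C$ and $\langle \grad f,\delta\bftheta_0\rangle<0$. In particular $\langle \grad V_i,\delta\bftheta_0\rangle\ge 0$ for each $i$, but possibly with equality. The main (mild) obstacle is promoting these weak inequalities to strict ones, and this is exactly where the hypothesis $C^\bot\neq\varnothing$ enters: picking $w\in\mathrm{int}(C^\bot)$ (so $\langle \grad V_i, w\rangle>0$ for all $i$) and setting $\delta\bftheta:=\delta\bftheta_0+\varepsilon w$ for sufficiently small $\varepsilon>0$ preserves $\langle \grad f,\delta\bftheta\rangle<0$ by continuity while enforcing $\langle \grad V_i,\delta\bftheta\rangle>0$ for every $i$. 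This $\delta\bftheta$ then witnesses the desired simultaneous negative values.
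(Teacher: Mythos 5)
Your proposal is correct. The reduction to the sign pattern ``all $\langle \grad V_i,\delta\bftheta\rangle$ of one sign, $\langle \grad f,\delta\bftheta\rangle$ of the opposite sign'' is exactly the right reformulation, the easy direction via writing $\grad f=\sum_i\lambda_i\grad V_i$ with $\lambda_i\ge 0$ is airtight, and the forward direction via separation of the (closed, finitely generated) cone $C$ from the external point $\grad f$, followed by the perturbation $\delta\bftheta_0+\varepsilon w$ with $w$ chosen so that $\langle \grad V_i,w\rangle>0$ for all $i$, correctly turns the weak inequalities into strict ones; this is precisely where the hypothesis $C^\bot\ne\varnothing$ is needed, and you identified that. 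The route differs somewhat from the paper's: the paper does not invoke a separation theorem directly, but instead observes that $Con(\grad f)^\bot=H_-(\grad f)$ and applies its biduality/order-reversal result (Lemma~\ref{lemma:conic_hull_duals}) to get the characterization $\grad f\in C\Leftrightarrow C^\bot\subseteq H_-(\grad f)$, so that a Pareto-bad direction is exactly a point of $C^\bot\setminus H_-(\grad f)$; strictness of the $f$-inequality is handled there by an openness remark rather than your explicit $\varepsilon$-perturbation. Your decomposition (separation plus perturbation for one direction, a two-line convex-combination computation for the other) is more self-contained and makes the role of $C^\bot\ne\varnothing$ more transparent, while the paper's version is shorter because it leans on the appendix lemma; the two are equivalent at bottom, both being conic duality. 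Two cosmetic remarks: the paper's $C^\bot$ is defined by \emph{strict negative} inner products, so the set you call the interior of the dual cone is $-C^\bot$ in the paper's notation (harmless, since one is nonempty iff the other is), and your witness has $f$ decreasing with all $V_i$ increasing, whereas the paper's has the opposite signs; the two are swapped by $\delta\bftheta\mapsto-\delta\bftheta$ and both certify $\mathscr{V}^i_{f,\delta\bftheta}<0$ for all $i$.
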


\begin{proof}
Note that the dual cone $Con(\grad f)^\bot = H_-(\grad f)$. So by Lemma~\ref{lemma:conic_hull_duals} (see appendix~\ref{app:ConicHull}), 
since $C^\bot$ is nonempty, $C^\bot \subseteq H_-(\grad(f))$ iff $\grad(f) \in C$.
Thus,  $\grad(f) \notin C$ means that the set difference $C^\bot \setminus H_-(\grad(f))$ is nonempty. Since those
two sets are defined by strict inequalities, this set difference is open. So in fact $\grad(f) \notin C$
 iff there is some $\delta {\theta}$ such that both $\delta {\theta} \in C^\bot$ and $\delta {\theta} \notin cl [H_-(\grad(f))]$,
 the closure of $H_-(\grad(f))$. For that $\delta {\theta}$, $\langle \delta {\theta}, V_i\rangle <0$ for all $i$, but $\langle \delta {\theta} ,\grad(f)\rangle >0$. This means $\mathscr{V}^i_{f, \delta {\theta}} ({\theta}) < 0$ for all $i$. This argument also works in the converse direction. 
\end{proof}

\noindent The existence condition in Prop.~\ref{prop:pareto_neg_val_f}
is crucial. To give a simple example, note that if $\grad V_1 = -\grad
V_2$, then even if $\grad f \notin C$, there is no direction $\delta
\vec{\theta}$ such that both $\set{V}^1_{f, \delta \vec{\theta}}
(\vec{\theta}) < 0$ and $\set{V}^2_{f, \delta \vec{\theta}}
(\vec{\theta}) < 0$.{\footnote{To see this, note that if there were
    such a $\delta \vec{\theta}$, it could not lie in
    $H_0(\grad(V)^1)$, and so must either lie in $H_-(\grad(V)^1)$ or
    $H_+(\grad(V)^1) = H_-(\grad(V)^2)$. Wolog assume the former. Then
    $\set{V}^1_{f, \delta \vec{\theta}} (\vec{\theta}) < 0$ would
    require that $\delta \vec{\theta} \in H_+(\grad f)$.  However
    $\delta \vec{\theta} \in H_-(\grad(V)^1) = H_+(\grad(V)^2)$ would
    then force $\set{V}^2_{f, \delta \vec{\theta}} (\vec{\theta}) >
    0$.}}  The reason this is not a counterexample to
Prop.~\ref{prop:pareto_neg_val_f} is that $C^\bot$ is empty in this
case.  An simple example where such a situtation arises would be a
constant-sum game, i.e. $\sum_i u_i(\vec{\theta}) = c$ for all
$\vec{\theta}$\footnote{Note that if the sum of the utilities of the
  players depends on $\vec{\theta}$, $Con(\{\grad(V_i)\})$ could be
  pointed even though the game is constant sum for any particular
  $\vec{\theta}$ (albeit with a different constant). Thus, The
  character of games infinitesimally close to $\vec{\theta}$ is what
  is important.}. Then, $Con(\{\grad(V_i)\})$ is not pointed since
$\grad V_1 = - \sum_{i=2}^N \grad V_i$ and so, by
Lemma~\ref{lemma:conic_props}, $C^\bot = \varnothing$.

When the condition in Prop.~\ref{prop:pareto_neg_val_f} holds,
there is a direction from $\vec{\theta}$ that is Pareto-improving, yet
reduces $f$. So for example, if that condition holds for a particular
set of MAIDs when $f$ is the information capacity of a channel in the
MAID, then all players would benefit from reducing the capacity of
that channel.

Similarly, to the second order mismatch condition above which implies
that negative value of information is rather generic, the conditions
of Prop.~\ref{prop:pareto_neg_val_f} are expected to hold if the
number of players $N$ is less than the dimension $d$ of the parameter
space $\Theta$. Simply observe that $Con(\{\grad(V_i)\})$ is spanned
by $N$ vectors and thus of at most dimension $N < d$. Now, it is
rather non-generic that an unrelated vector, e.g. $\grad(f)$, happens
to lie in this lower-dimensional subspace. Thus, in this case, Pareto
improving changes to $\theta$ which decrease the value of $f$ should
be prevalent.\footnote{However it is important to remember that Pareto-improving changes to
$\vec{\theta}$ are not the same thing as Pareto-improving changes to a
strategy profile in a game specified by a single $\vec{\theta}$.  We can
have Pareto-improving changes to $\vec{\theta}$ even if $\vec{\theta}$ specifies a
game in which there are no Pareto-improving changes to the equilibrium
strategy profile.}

Note that the bit of whether or not $\grad(f) \in Con(\{\grad(V_i)\}$
is a covariant quantity. So the implications of
Prop.~\ref{prop:pareto_neg_val_f} do not change if we change the
coordinate system.  In fact, the value of that bit is independent of
our choice of the metric, so long as no $\grad(V_i)$ is in the kernel
of the metric. 

A similar result, can be obtained for Pareto positive value of
information.  By Def.~\ref{def:val_f_in_direction}, for any $i$, $
\mathscr{V}^i_{-f, \delta {\theta}} ({\theta}') < 0$ iff $
\mathscr{V}^i_{f, \delta {\theta}} ({\theta}') > 0$.  So an immediate
corollary of Prop.~\ref{prop:pareto_neg_val_f} is that that whenever
$C^\bot \ne \varnothing$, there a direction $\delta \theta$ in which $
\mathscr{V}^i_{f, \delta {\theta}} ({\theta}') > 0 \; \forall i$ iff
$-\grad(f) \;\notin\; C$.  So if $C^\bot \ne \varnothing$, then if in
addition the \emph{negative} of the gradient of $f$ is not contained
in the conic hull of the gradients of the players' expected utilities,
there is a direction in which we can change the game parameter vector
which will increase both $f$ and expected utility for all players.

\section{Examples of differential value}
\label{sec:Examples}
\label{sec:two_player}

\subsection{Decision problem}

As a first example to illustrate our approach we consider a game
involving a single player. In such a decision problem one agent plays against Nature. (So this MAID is
the special case of an influence diagram.) A simple example, which we
will use to illustrate our notion of differential value of information
is shown in \fig{decision}.

\showfigure{
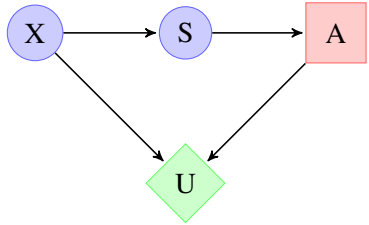
\begin{figure}[h]
  \begin{center}
    \begin{tikzpicture}
      \tikzstyle{nature}=[draw=blue!60,fill=blue!20,circle,minimum size=5mm]
      \tikzstyle{player}=[draw=red!60,fill=red!20,rectangle,minimum size=8mm]
      \tikzstyle{utility}=[draw=green!60,fill=green!20,diamond,minimum size=8mm]
      \tikzstyle{standard}=[shorten >=1pt,>=stealth',semithick]
      \node (nature) at ( 0,0) [nature] {X};
      \node (channel) at ( 2,0) [nature] {S};
      \node (player) at ( 4,0) [player] {A};
      \node (utility) at (2,-2) [utility] {U};
      \draw [->] (nature) edge[standard] (channel);
      \draw [->] (channel) edge[standard] (player);
      \draw [->] (nature) edge[standard] (utility);
      \draw [->] (player) edge[standard] (utility);
    \end{tikzpicture}
  \end{center}
  \caption{\label{fig:decision} A simple decision situation.}
\end{figure}}

In this MAID there is a state of Nature random variable $X$ taking on
values $x$ according to a distribution $p(x)$.  The agent observes $x$
indirectly through a noisy channel that produces a value $s$ of a
signal random variable $S$ according to a probability distribution
$p(s \mid x;\bftheta)$ parametrized by $\vec{\theta}$. The agent then takes
an action, which we write as the value $a$ of the random variable $A$,
according to the distribution $\sigma(a \mid s)$.  Finally, the
utility $u(x,a)$ is a function that depends only on $x$ and $a$.

Here, we consider a simple setup using a
binary state of Nature $p(X = 0) = p(X = 1) = \frac{1}{2}$ and a
binary channel $p(s | x; \bftheta)$ with parameters $\vec{\theta} =
(\epsilon^1, \epsilon^2)$ specifying the transmission errors for the
two inputs
\[ p(s | x; \vec{\theta}) = \left\{ \begin{array}{ll}
    1 - \epsilon^1 & \mbox{if } x = 0, s = 0 \\
    \epsilon^1 & \mbox{if } x = 0, s = 1 \\
    \epsilon^2 & \mbox{if } x = 1, s = 0 \\
    1 - \epsilon^2 & \mbox{if } x = 1, s = 1
    \end{array} \right.
  \] 

The player has two moves and obtains utility
\[
u(x, a) = \left\{ \begin{array}{ll}
    0  & x = 0, a = 0 \\
    -2 & x = 0, a = 1 \\
    0  & x = 1, a = 0 \\
    1 & x = 1, a = 1
  \end{array} \right.
\]
Thus, the player has a safe action $A = 0$, but can obtain a higher
utility by playing $1$ when she is certain enough that the state of
Nature is $1$.

We will refer to this MAID as the \textbf{Blackwell ID},
since it corresponds to the situation analyzed by Blackwell.
A fully rational
decision maker will set $\sigma(a \mid s)$ to maximize the
expected utility. 
Here, to have a differentiable strategy, we assume that the agent is
not fully rational but plays a quantal best response with a finite
$\beta$:
\[ \sigma(a | s) = \frac{1}{Z(s)} e^{\beta \E(u \mid s, a ;
  \bftheta)} \] where $Z(s) = \sum_a e^{\beta \E(u \mid s, a ;
  \bftheta)}$ and $\E(u \mid s, a ; \bftheta)$ denotes the conditional
expected utility for a given signal $s$ and action $a$.

Fig.~\ref{fig:BlackwellIso} show the isoclines of the
mutual information $I(X ; S)$ and the expected utility $V(\beta,
\epsilon^1, \epsilon^2)$ for $\beta = 5$. Both mutual information and
expected utility improve with decreasing channel noise, i.e. with reducing
$\epsilon^1$ and/or $\epsilon^2$. Nevertheless, the isoclines of these quantities do not 
exactly match. Thus it is
possible to change the channel parameters such that the expected
utility increases while the mutual information decreases. As an
example, consider any parameter combination ($\epsilon_1, \epsilon_2$)
where the isoclines plotted in Fig.~\ref{fig:BlackwellIso} intersect. 
Moving into the region ``above'' the $MI$ isocline and below the $V$
isocline will increase $V$ while decreasing $MI$. 

This potential inconsistency between changes to mutual information and
changes to expected utility does  not violate Blackwell's
theorem. That is because we allow arbitrary changes to the channel parameters; those that
result in the inconsistency
which cannot be represented as garbling in the sense of
Blackwell. As an illustration, for one particular pair of game parameter values, the set of
channels which are more or less informative according to Blackwell's partial order are
visualized as the light and dark gray regions in fig.~\ref{fig:BlackwellIso} respectively. 

This potential for an inconsistency between changes to information and
utility is also illustrated by considering the gradient vectors $\grad
I(X; S)$ and $\grad(V)$ which are orthogonal to the isoclines (wrt/
the Fisher metric)\footnote{Consider a (joint) distribution $p(x ;
  \vec{\theta})$ that is parameterized by $\vec{\theta}$. Then, the
  Fisher information metric $g_{ij}$ is defined as
\[ g_{ij} = \sum_x p(x; \vec{\theta}) \frac{\partial \log p(x;
  \vec{\theta})}{\partial \theta^k} \frac{\partial \log p(x;
  \vec{\theta})}{\partial \theta^l} \] The statistical origin of the
Fisher metric lies in the task of estimating a probability
distribution from a family parametrized by $\vec{\theta}$ from
observations of the variable $x$. The Fisher metric expresses the
sensitivity of the dependence of the family on $\vec{\theta}$, that
is, how well observations of $x$ can discriminate among nearby values
of $\vec{\theta}$.}. By Prop.~\ref{prop_nonalign}, only if those
gradients are collinear is it the case that \emph{every} change to the
game parameter vector increasing the expected utility must necessarily
increase the mutual information. However in
Fig.~\ref{fig:BlackwellIso} we clearly see that these gradients giving
the directions of steepest ascent of $I(X; S)$ and $V$ are
different. Thus, locally we can always find directions with negative
value of information. Note that this does not mean that the agent
prefers less information globally, i.e. there is no path, along which
the value of information is strictly negative, from no information,
e.g. $\epsilon_1 = \epsilon_2 = \frac{1}{2}$ to full information
($\epsilon_1 = \epsilon_2 = 0$).

Conversely,
Cor.~\ref{corr:always_good_directions} implies that we can find
infinitesimal changes to the game parameters that cause both expected utility and mutual information
to increase, in agreement with Blackwell's theorem.  In
the present example, such changes arise if we simultaneously
reduce both channel noises, e.g. by moving directly towards the origin.

\showfigure{
\begin{figure}[h]
  \includegraphics[width=0.8\textwidth]{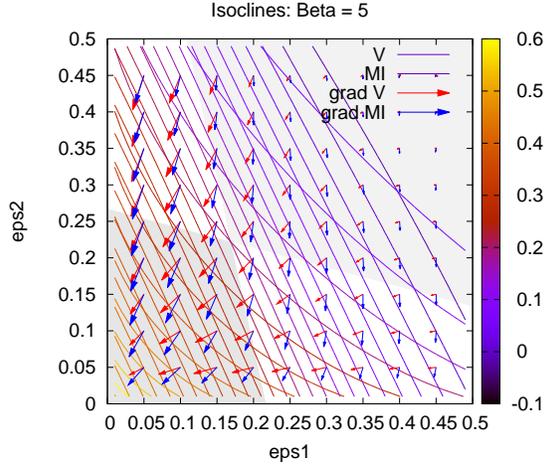}
  \caption{\label{fig:BlackwellIso} Isoclines of expected utility $V$
    and mutual information $I(X; S)$ with corresponding gradient
    vectors showing the directions of steepest ascent (wrt/ the Fisher
    metric).  The gray regions show which channels are more informative (dark
    gray) or less informative (light gray) than the channel with noise
    parameters $(\epsilon_1 = 0.17, \epsilon_2 = 0.22)$, in the sense of the term arising
    in Blackwell's analysis.}
  \end{figure}}

\subsection{Leader-follower example}

Next, we turn to a simple games involving two players, namely a leader $A^1$ and a
follower $A^2$ (see \fig{game}). 
\showfigure{
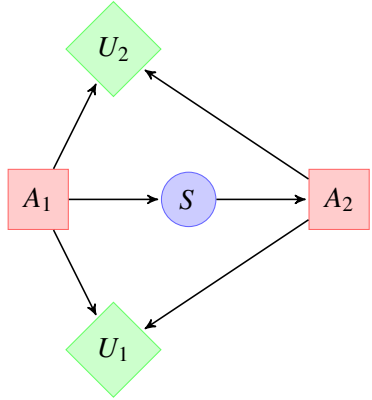
\begin{figure}[h]
  \begin{center}
    \begin{sideways}
      \begin{tikzpicture}
        \tikzstyle{nature}=[draw=blue!60,fill=blue!20,circle,minimum size=5mm]
        \tikzstyle{player}=[draw=red!60,fill=red!20,rectangle,minimum size=8mm]
        \tikzstyle{utility}=[draw=green!60,fill=green!20,diamond,minimum size=8mm]
        \tikzstyle{standard}=[shorten >=1pt,>=stealth',semithick]
        \node[rotate=-90] (player1) at ( 4,2) [player] {$A_1$};
        \node[rotate=-90] (channel1) at ( 4,0) [nature] {$S$};
        \node[rotate=-90] (player2) at ( 4,-2) [player] {$A_2$};
        \node[rotate=-90] (utility2) at (6,1) [utility] {$U_2$};
        \node[rotate=-90] (utility1) at (2,1) [utility] {$U_1$};
        \draw [->] (player1) edge[standard] (channel1);
        \draw [->] (channel1) edge[standard] (player2);
        \draw [->] (player1) edge[standard] (utility1);
        \draw [->] (player1) edge[standard] (utility2);
        \draw [->] (player2) edge[standard] (utility2);
        \draw [->] (player2) edge[standard] (utility1);
      \end{tikzpicture}
    \end{sideways}
  \end{center}
  \caption{\label{fig:game} A 2-player game where player $A^1$
    (leader) can move before player $A^2$ (follower).}
\end{figure}}
In contrast to the single-player game, now  the distribution
$p(X)$ of the state of Nature is replaced by the equilibrium strategy
$\sigma(a_1)$ of player $1$, a strategy that will also depend on the parameters
$\bftheta$. Another difference is that there are now two utility functions, one
for each player. 

As in the decision problem, we consider binary state spaces and an
asymmetric binary channel with parameters $\bftheta = (\epsilon^1,
\epsilon^2)$. We use the utility functions of the players analyzed in
\cite{bagw95}\footnote{The game is a discretization of the Stackelberg duopoly game, with two moves for each
  player.}  
\begin{center}
  \begin{tabular}{cc|c@{\hspace{8ex}}c@{\hspace{8ex}}}
    \multicolumn{2}{c}{} & \multicolumn{2}{c}{follower} \\
    & $u_1$/$u_2$ & $L$ & $R$  \\ \hline
    \multirow{2}{1em}{\begin{sideways}leader\end{sideways}} & $L$ & (5, 2) & (3, 1) \\
    & $R$ & (6, 3) & (4, 4)
  \end{tabular}
\end{center}
Bagwell pointed out that in the pure strategy Nash equilibria the leader can only
take advantage of moving first (by playing $L$), when the follower can
observe his move perfectly (Stackelberg solution). As soon as the
slightest amount on noise is added to the channel only the equilibrium of the
simultaneous move game (both playing $R$, the Cournot solution) remains\footnote{There are
  additional mixed equilibria, which change smoothly with the
  noise. These are mentioned  in   Bagwell \cite{bagw95}, but not discussed further.}.
Here we show that our differential analysis uncovers a much
richer structure. In particular, we show there exist a QRE branch and
parameters for the noise of the channel such that both players prefer
more noise.

In the decision case, we used $I(X; S)$ to quantify the amount of
information that is available to the (single) player. In the multi-player game setting, the
corresponding quantity is $I(A_1; S)$, and strongly depends on the move of
the leader. As an illustration, 
consider a symmetric channel $p(s | a_1)$ parametrized by the single value $\epsilon \equiv
\epsilon^{1,2}$. Fig.~\ref{fig:BagwellQREEpsBeta} shows how the strategy
of the leader depends on the channel noise
and the rationality $\beta = \beta^{1,2}$ of the players.
\showfigure{
\begin{figure}[h]
  \begin{center}
    \includegraphics[width=0.7\textwidth]{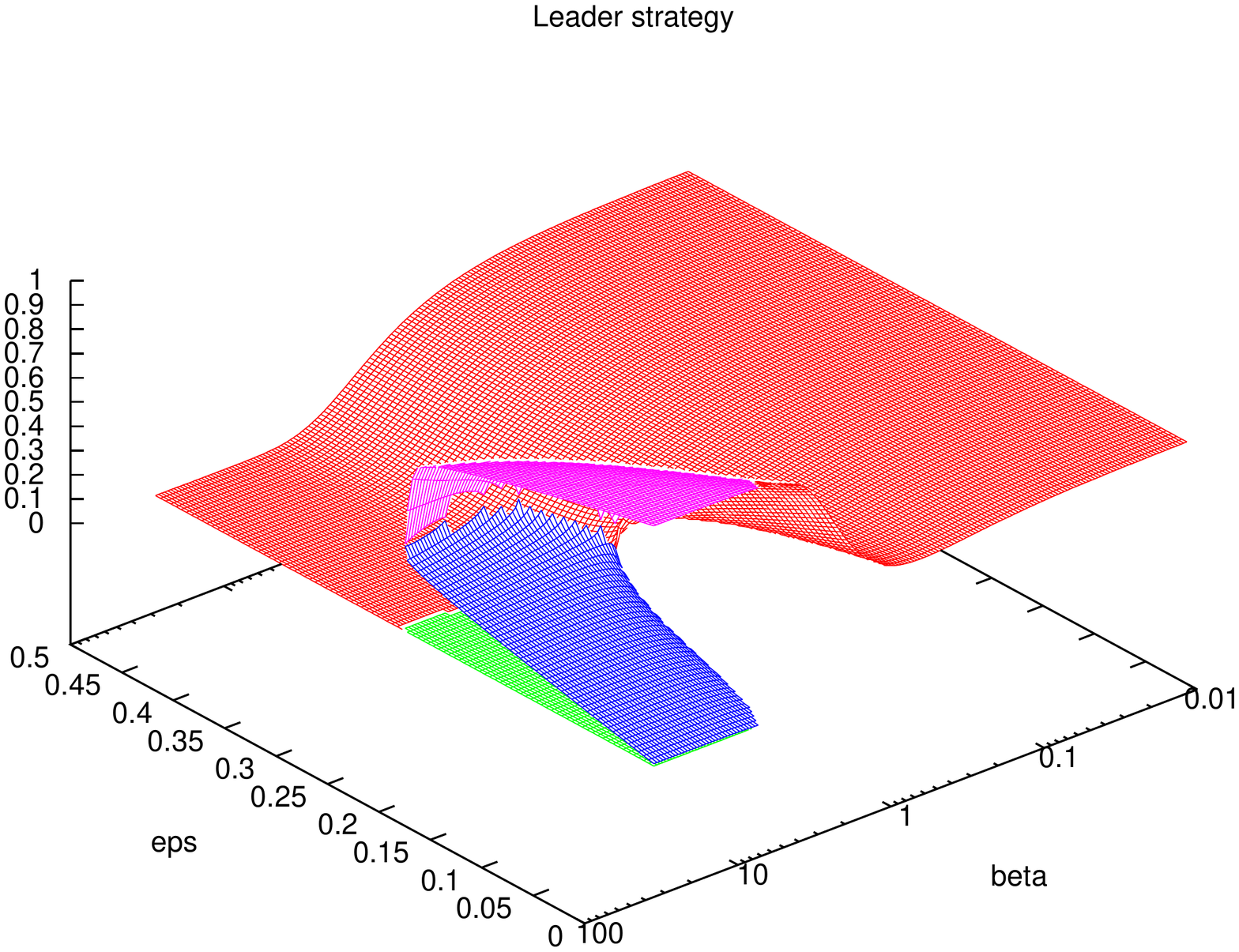}
  \end{center}
  \caption{\label{fig:BagwellQREEpsBeta} Surface of QRE equilibria
    for the symmetric channel leader-follower game. 
  }
\end{figure}}
This shows that for sufficiently rational players ($\beta > 5$) there exist
multiple QRE solutions. For $\beta \to \infty$ the three QRE
equilibria converge to the pure strategy Nash equilibrium where both
play $R$ (Cournot outcome, lower branch in red/green) and to the two
mixed strategy Nash equilibria of the original game, respectively.\footnote{This demonstrates that our
  analysis can easily be extended to analyze Nash equilibria. In this case,
  choosing a branch corresponds to choosing a particular equilibrium, and the
  partial derivatives $\frac{\partial \sigma}{\partial \epsilon^i}$
  vanish, as long as the equilibrium exists.} For $\epsilon = 0$, the
upper mixed strategy equilibrium coincides with the equilibrium mentioned above
where the leader has an advantage. In the following, we focus on the branch
that smoothly connects to the origin $\beta = 0$, the so-called
``principal branch'', which includes that upper equilibrium.

\showfigure{
\begin{figure}[h]
  \begin{center}
    \begin{tabular}{cc}
      \hspace*{-10ex} \includegraphics[width=0.6\textwidth]{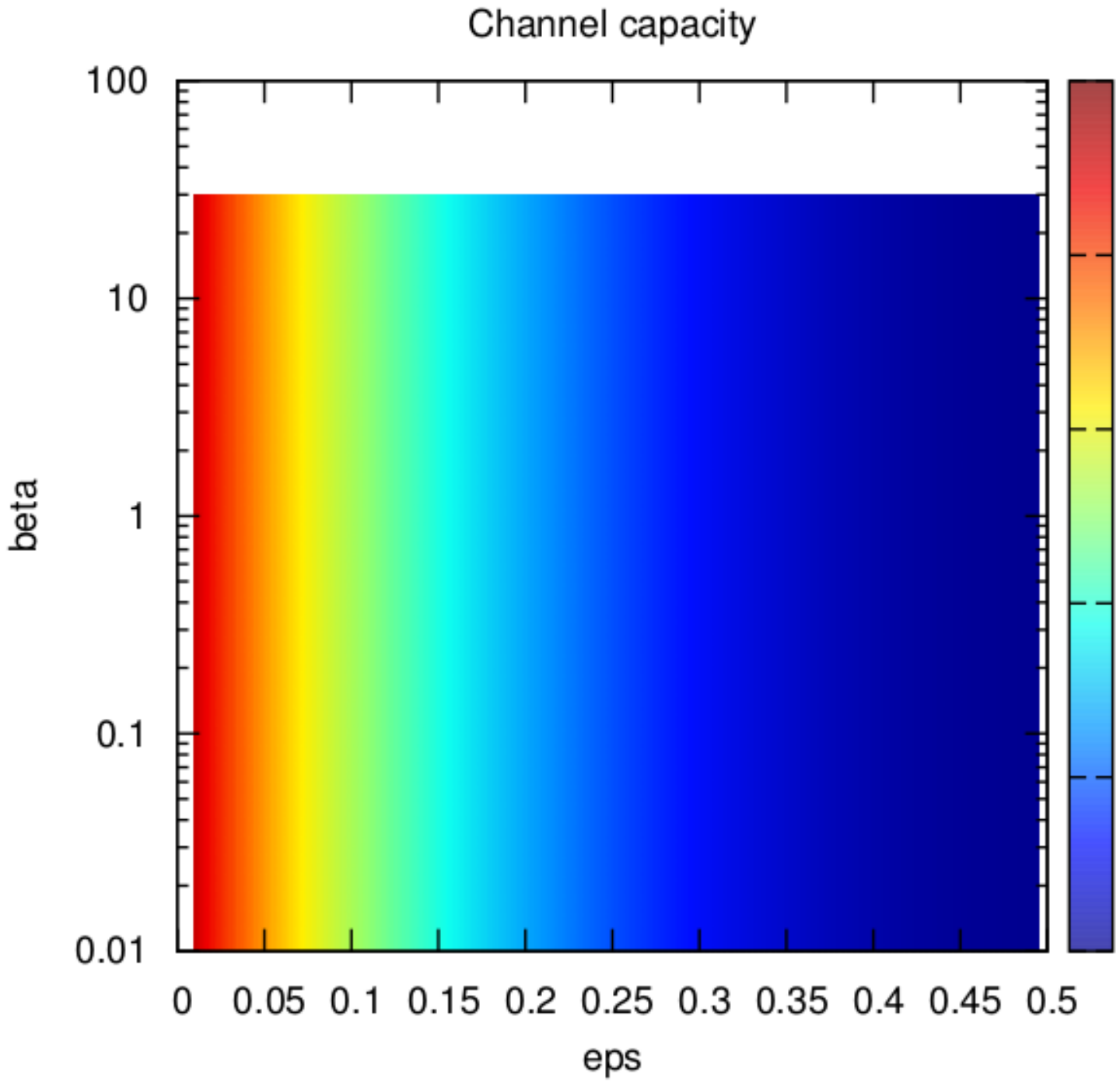} &
      \hspace*{-10ex} \includegraphics[width=0.6\textwidth]{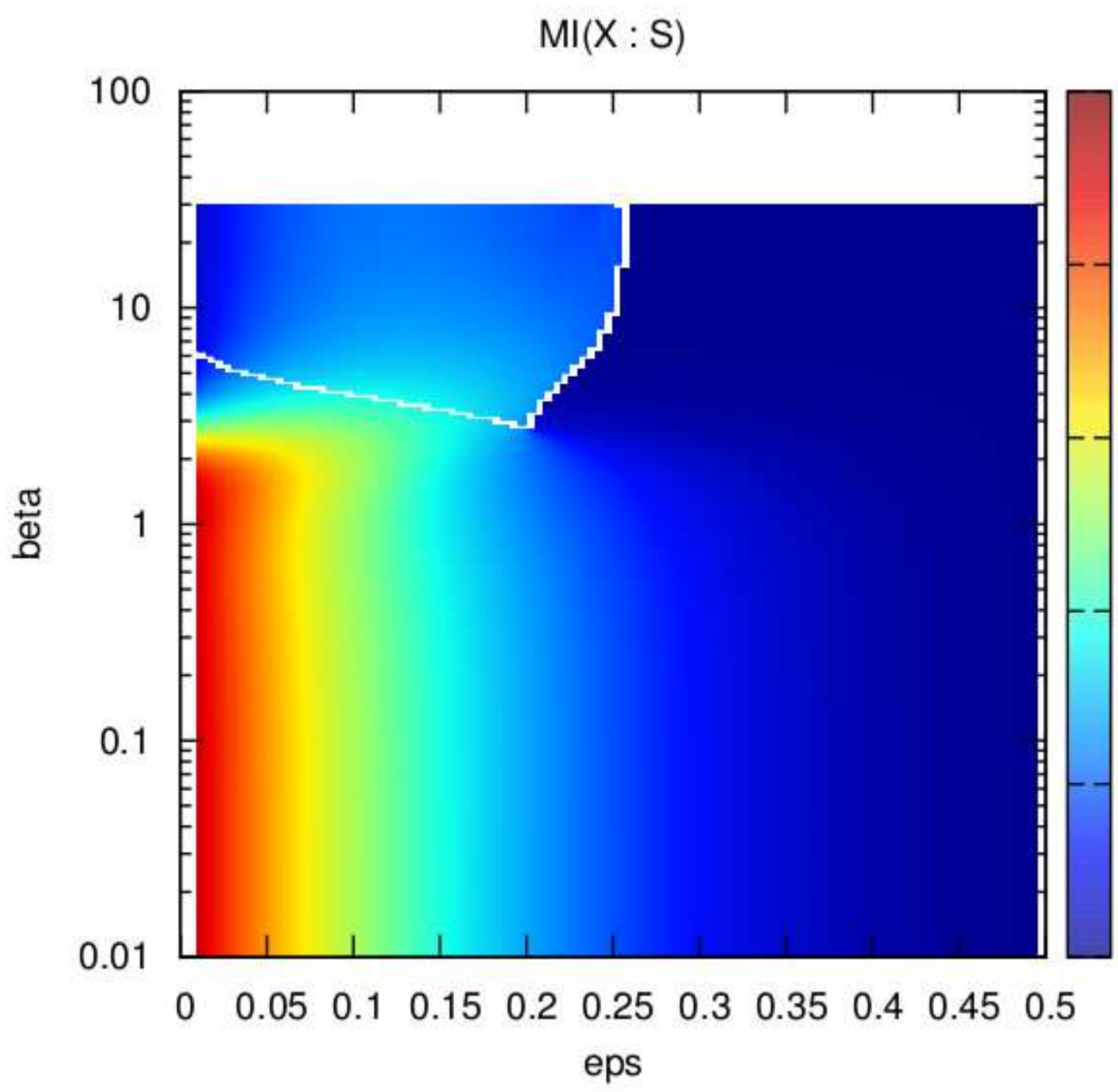}
    \end{tabular}
  \end{center}
  \caption{\label{fig:BagwellMIEpsBeta} Channel capacity and mutual
    information $I(A_1; S)$ in the symmetric channel leader-follower game.}
\end{figure}}

Fig~\ref{fig:BagwellMIEpsBeta} shows the channel capacity as a function of the conditional
distribution in the channel, as well as
the mutual information $I(A_1; S)$ that is actually transferred across
the channel. As soon as the leader is rational enough, he starts to
prefer the move $L$. This means that the mutual information $I(A_1; S)$
decreases when the leader gets rational enough.{\footnote{Remember that 
$I(A_1; S) = \E_{p(A_1)}[D_{KL}(p(S|a_1) \mid\mid p(S))]$ and thus it vanishes if the leader plays
a pure strategy, since the average becomes trivial.}} However, the potential information that could be
transferred, i.e., the channel capacity, is independent of player strategies, and so is still high.
This illustrates how 
studying the information capacity rather than the mutual information
is perhaps more in line with standard game theory, where the
information partition is considered as part of the specification of the game parameters,
independent of the resultant player strategies.
For these reasons we focus our analysis on the channel capacity.

In this simple symmetric-channel scenario the space of game parameters concerning the channel noise is one-dimensional, and so
analyses of ``gradients" over that space are not particularly illuminating. Accordingly,
to further investigate the role of information in the leader-follower
game we consider an asymmetric channel, so that $p(s | a_1)$ is
parametrized by two noise parameters, $\epsilon^1, \epsilon^2$,
giving the probability of error for the two inputs $a_1 = L$ and $a_1 = R$
respectively. We also fix $\beta = 10$ for both players. In this case, we
again find multiple QRE branches when the channel noise is
small enough.
We focus on the branch where
the leader has the biggest advantage and can achieve the highest
utility while the utility of the follower is lowest. In the following, 
we refer to this QRE solution as the ``Stackelberg branch''.

\showfigure{
}

\showfigure{
\begin{figure}[h]
  \begin{tabular}{ll}
    {\bf A)} & {\bf B)} \\[-5ex]
    \hspace*{-10ex} \includegraphics[width=0.6\textwidth]{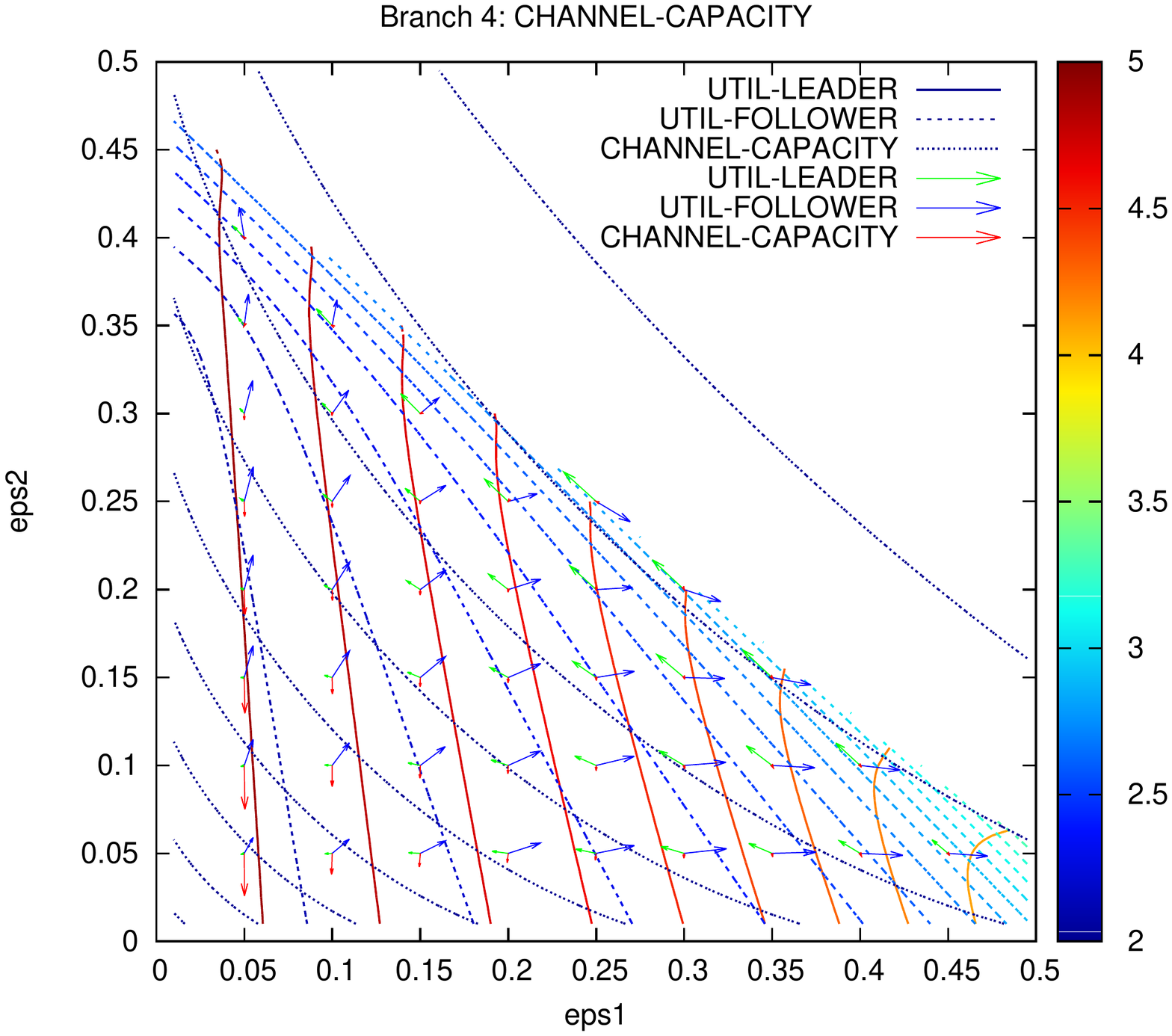} &
    \hspace*{-10ex} \includegraphics[width=0.6\textwidth]{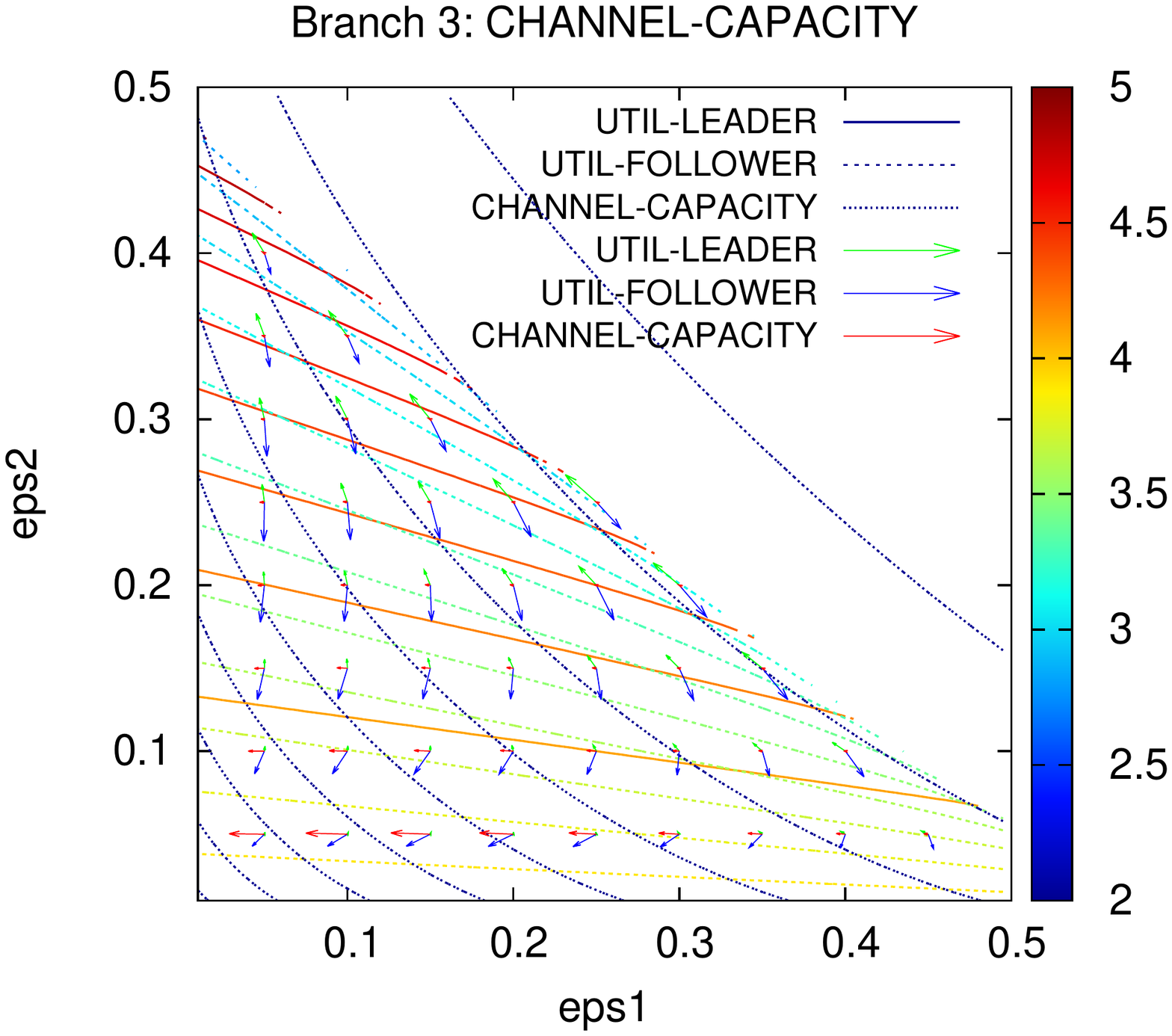}
  \end{tabular}
  \caption{\label{fig:StackelbergBranchIso} Isoclines of the expected
    utilities for the leader and the follower as well as the channel
    capacity in the asymmetric channel scenario, on the Stackelberg branch ({\bf A}) and on the branch just
    below it in the space of leader strategies
     ({\bf B}). The expected utility levels
    are color-coded and the channel capacity (dotted) increases
    towards the origin (lower-left corner). The gradient vectors show
    the corresponding directions of steepest ascent (wrt/ the Fisher
    metric). }
\end{figure}}

Fig.~\ref{fig:StackelbergBranchIso} {\bf A} shows the isoclines of
expected utility for both leader and follower, the isoclines of the
channel capacity, and the corresponding gradient fields, all on the
Stackelberg branch. We immediately see that the gradients are nowhere
collinear. So from Prop.~\ref{prop_nonalign} we know that for all
values of the game parameters, the channel noises can be
infinitesimally changed such that the capacity increases whereas the
expected utility (of either the leader or follower)
decreases.\footnote{By Prop.~\ref{prop_nonalign_generic} this behavior
  is rather generic and thus expected.} More interestingly, the
gradient field shows that everywhere $\grad [\mbox{Capacity}] \notin
Con(\{\grad(V)^1, \grad V^2\})$. Thus by Prop.~\ref{prop_allneg},
there must be directions such that both players are better off and yet
the channel capacity decreases when the game is moved in that
direction.  As before, while this is true locally, it does not mean
that both players prefer less information globally.
 
 Simply by redefining $f$ as the negative of the capacity as the
function of interest (which amounts to flipping the corresponding gradient
vectors in our figures), the same condition in Prop.~\ref{prop_allneg} can be used to identify regions of channel
parameter space where (there are directions in which we can move the game parameters in which)
both players prefer more capacity. Now $\grad
(-\mbox{Capacity}) \in Con(\{\grad(V)^1, \grad(V)^2\})$, except for a
small region in the upper left corner (containing for example the point $(\epsilon^1 = 0.05,
\epsilon^2 = 0.35)$). So we can immediately conclude that there are no
such directions, for all parameter values outside this region.

To illustrate the effect of our choice of which equilibrium branch to analyze,
Fig.~\ref{fig:StackelbergBranchIso} {\bf B} shows the corresponding
isoclines on the branch just below the Stackelberg branch.
Again, the gradient vectors are nowhere
aligned. However now $\grad[ \mbox{Capacity}]$ $\in$ $Con(\{\grad(V)^1, \grad
V^2\})$. Thus, we can conclude that on this branch there are no directions where both
players prefer a decrease in channel capacity, in contrast to the case on the Stackelberg branch.

Summarizing, even in rather simple
leader-follower games, there is a surprisingly complicated underlying geometric structure.
However  just like in the decision-theoretic scenario, our theorems provide 
general conditions on the gradient vectors in leader-follower games that  determine regions
of channel parameter space in which there is negative value of information for both
players. Moreover, these conditions are easily checked. In particular, they do
not require laboriously comparing different information
partitions to get non-numeric partial orders. Instead our conditions are based on cardinal
quantifications of how much the players value infinitesimal changes to the
information structure of the game, measured in units of utility per bits of information. 

This kind of analysis can be applied multiple times at once, to quantify how much the players
value different candidate changes to the information structure of the game. These quantifications 
are all  measured in the same units, of utility per bits of information. So we can use their values
to evaluate marginal rates of substitution of various kinds of changes to the game's information structure. 
More generally, we can apply our analysis to quantify how much the players
value different candidate infinitesimal changes to \emph{any} aspects of the game specification,
e.g., to parameters in utility functions. This allows us to evaluate marginal rates of substitution
of all aspects of the game specification. This is the subject of future research.

\subsection{Illustrative examples}

To illustrate the generality of
our framework, in this section we present additional examples, emphasizing the
implications of our results in Sec.~\ref{sec:properties} for several different
economic scenarios.

\label{sec:more_examples}

\begin{example}
Consider a variant of the well-known scenario where a decision by an individual 
is a costly signal to a prospective employer of their capability as employees. 
In this variant there is a car repair chain that wants to hire a new mechanic. (So we have two players.)
There is an applicant for the position who has some pre-existing ability at car repair.
That ability is their type; it is determined by a prior distribution that neither
player can affect. The repair chain cannot directly observe the applicant's ability.
So instead, they will give the applicant a written test of their
knowledge of cars. The repair chain will then use the outcome of that test
to decide whether to offer the job to the applicant and if so, at what salary. 
(The idea is that giving too low a salary to a new mechanic will raise the odds that after 
being trained by the repair chain that new mechanic would simply leave for a different
repair chain, at a cost to the repair chain.)

The applicant will study before they take the test. The harder they study, the
greater the cost to them. There is also a conditional distribution of how they
do on the test given their ability to repair cars and how hard they study. The fact that that distribution is
not a single-valued map means it is a noisy information channel, from the
studying decision of the applicant to the outcome of the test. 

Suppose that the repair chain feels frustrated by the fact that
the test gives them a noisy signal of the applicant's ability to
repair cars. Knowing this, a test-design company approaches the repair
chain, and offers to sell them a new test that has a double-your-money
back guarantee that it is less noisy than the current test (for some
functional of how ``noisy" a test is that the test company and repair
chain both use).  Thinking they will get double their money back if
they buy the new test but have lower expected utility, the repair
chain buys the test.{\footnote{Formally, in this example we must
    assume that the applicant knows nothing about this option that the
    repair chain has to buy a new test before examining the
    applicant. Rather the applicant is simply informed about the
    conditional distributions specifying the accuracy of the test ---
    whatever they are ---- before the applicant considers the test. If
    instead the applicant knew that the repair chain has the option to
    purchase the new test, it would mean that we have to consider an
    expanded version of the original game, in which the applicant must
    predict whether the repair chain purchases the new test, etc.}}

Our results show that generically, there are ways that the repair
chain will have lower expected utility with the new test --- but not
be able to invoke the guarantee to get any money back from the
test-design company, since the new test \emph{is} less noisy than the
old one. That is, there are directions $\delta \bftheta$ in the
parameters describing how the test is designed, such that the test is
made more accurate, but less useful for the repair chain, i.e. it has
a negative value of information.
\label{ex:noisy_signalling}
\end{example}

\begin{figure}
  \begin{tabular}{ll}
    {\bf A} & {\bf B} \\
    \begin{tikzpicture}
      \tikzstyle{nature}=[draw=blue!60,fill=blue!20,circle,minimum size=5mm]
      \tikzstyle{player}=[draw=red!60,fill=red!20,rectangle,minimum size=8mm]
      \tikzstyle{utility}=[draw=green!60,fill=green!20,diamond,minimum size=8mm]
      \tikzstyle{standard}=[shorten >=1pt,>=stealth',semithick]
      \node (nature) at ( 2,4) [nature] {$T$};
      \node (player1) at ( 2,2) [player] {$A_1$};
      \node (channel1) at ( 2,0) [nature] {S};
      \node (player2) at ( 2,-2) [player] {$A_2$};
      \node (utility2) at (4,1) [utility] {$U_2$};
      \node (utility1) at (0,1) [utility] {$U_1$};
      \draw [->] (nature) edge[standard] (player1);
      \draw [->] (player1) edge[standard] (channel1);
      \draw [->] (channel1) edge[standard] (player2);
      \draw [->] (player1) edge[standard] (utility1);
      \draw [->] (player2) edge[standard] (utility2);
      \draw [->] (nature) edge[standard] (utility1);
      \draw [->] (nature) edge[standard] (utility2);
      \draw [->] (player2) edge[standard] (utility1);
    \end{tikzpicture}
    &
    \begin{tikzpicture}
      \tikzstyle{nature}=[draw=blue!60,fill=blue!20,circle,minimum size=5mm]
      \tikzstyle{player}=[draw=red!60,fill=red!20,rectangle,minimum size=8mm]
      \tikzstyle{utility}=[draw=green!60,fill=green!20,diamond,minimum size=8mm]
      \tikzstyle{standard}=[shorten >=1pt,>=stealth',semithick]
      \node (player1) at ( 4,2) [player] {$A_1$};
      \node (channel1) at ( 4,0) [nature] {$S$};
      \node (player2) at ( 4,-2) [player] {$A_2$};
      \node (utility2) at (6,1) [utility] {$U_2$};
      \node (utility1) at (2,1) [utility] {$U_1$};
      \draw [->] (player1) edge[standard] (channel1);
      \draw [->] (channel1) edge[standard] (player2);
      \draw [->] (player1) edge[standard] (utility1);
      \draw [->] (player1) edge[standard] (utility2);
      \draw [->] (player2) edge[standard] (utility2);
      \draw [->] (player2) edge[standard] (utility1);
    \end{tikzpicture}
  \end{tabular}
  \caption{\label{fig:noisy_signalling} MAID for the noisy signalling
    game described in example~\ref{ex:noisy_signalling} (panel {\bf
      A}) and the leader-follower game from
    section~\ref{sec:two_player} (panel {\bf B}).}
\end{figure}
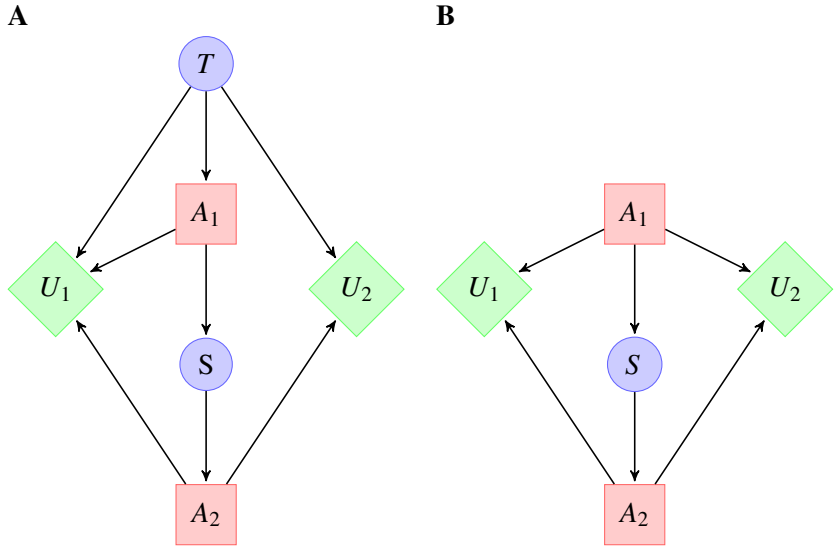

The MAID corresponding to this noisy signaling game is shown in
Fig.~\ref{fig:noisy_signalling} {\bf A}. For comparison, panel {\bf B}
reproduces the MAID for the leader-follower game that was studied in
section~\ref{sec:two_player}. In the noisy signaling game
there is an additional Nature node $T$ which player 1 (the applicant) can observe, but that
player 2 (the car repair chain) cannot observe. Another difference is that now the utility
of both players depends in part on the outcome of Nature's choice.  Since the applicant
incurs a cost if she studies, her utility directly depends on her move (study
hard or party). However the utility of the repair chain does not depend \emph{directly}
on the move of the applicant.

We now present a variant of Braess' paradox~\cite{Braess1968} which
provides a particularly striking example of how extra information can
simultaneously hurt all players of a game, even in the case of many
players. Braess' paradox is a ``paradox" that arises in congestion games over
networks. Although it has arisen in the real world in quite complicated
transportation networks, it can
be illustrated with a very simple network. We illustrate in this next example,
before presenting our variant of Braess' paradox.

\begin{example}
  Consider a scenario where there are 4000 players, all of whom must
  commute to work at the same time, from a node ``Start" on a network
  to a node ``End". Say that there are a total of four roadways
  between Start and End that all the players know exist.  (See
  Fig.~\ref{fig:braess}.)

  \showfigure{
    \begin{figure}[h]
      \includegraphics[width=\textwidth]{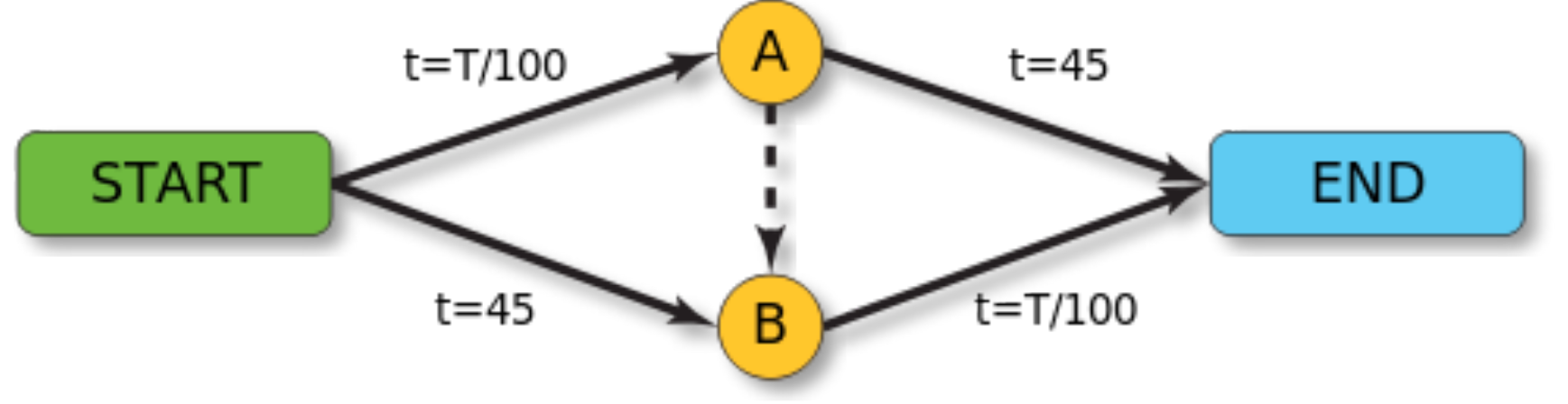}
      \noindent \caption{Network exhibiting Braess' paradox. The symbols are explained in the text. }
      \label{fig:braess}
    \end{figure}}
  
  The move of each player is a choice of what route they will follow
  to get to work. There are two choices they can make. The first is
  the route Start-A-End, and the second is the route Start-B-End.  The
  number of minutes it takes each player to follow their chosen route
  is illustrated in Fig.~\ref{fig:braess}.  In that figure, $t$
  indicates the amount of time it takes a player to cross the
  associated road. In addition, $T$ refer to the total number of
  players (out of 4000) who follow the associated road. So on those
  two roads where $t = T / 100$, the greater the amount of traffic
  $t$, the slower the players can drive, and so the greater the amount
  of time $T$ it takes to cross the road. In contrast, it takes 45
  minutes to traverse the other two roads, regardless of the amount of
  traffic on the roads.

  The paradox arises, if a new highway from A to B (dashed) is opened
  which takes only 4 minutes to traverse. Now, each player has a third
  option, namely to take the route Start-A-B-End. So we have a new game.
  
  It is straightforward to verify that in all Nash equilibria of the
original game, i.e. the game without the new highway, exactly half of the
players choose Start-A-End, and half choose Start-B-End. The total
cost (i.e., negative utility) to all players is a 65 minute commute.
When the new highway is opened, every player will choose Start-A-B-End.
This increases the travel time of all players to 84 minutes. Thus,
the expected cost for \emph{all} players rises when they have more
possible actions.

Now consider a variant of Braess' paradox where
both games are changed slightly, so that they only differ in their information structures. In this variant,
the new highway exists in both games, but is closed due to
construction with prior probability of .1. We assume that the cost of 4 minutes
is incurred by a player if they try to go down the new highway both
if they are successful and get through to B, or if they are blocked
by the new construction and have to return to A. In addition,
in both games all players are informed
about whether the new highway is open, but via a noisy signal
(e.g. via a newspaper article from several days before saying when the
new highway is schedule to open). The difference between the games is
the amount of noise in that signal.

In the new first game, the signal is completely noisy, 
providing no information at all about whether the new highway is open.
So the players have to decide their strategy based purely on the prior probability of .1
that the new highway is open. It is straightforward to verify that in this case,
the Nash equilibrium is for half of the players to choose Start-A-End,
half to choose Start-B-End, and none to try to go down the new highway.
The resultant travel time per player is 65 minutes.{\footnote{In particular, if
any of the players in this strategy profile who choose Start-A-End were to
change their choice to try to go down the new highway, their new expected
travel time would be $.1(24 + 20.01) + .9(24 + 45) = 66.5$, which is
greater than 65, their travel time if they stick with their original strategy.}}

In contrast, in the new version of the second game, the signal is
noise-free. So with probability .1 the new highway is open,
all players know that, and therefore
all take the new highway, for a total travel time of 84 minutes. With
probability .9 the new highway is closed, all players know that, and
therefore none try to take the new highway, for a total travel time of
65 minutes. So the expected travel time per player in the new second game is
$.1(84) + .9(65)$.

So the extra information that is available in the new second game, but not in the new first game,
hurts all players.

  \label{ex:braess}
\end{example}

Note that in contrast to the leader-follower games and noisy signaling games
analyzed above, in our variant of Braess' paradox  the extra information concerns a move of Nature, not the
move of some other player. Nevertheless, our results on negative value
of information still apply. In particular, for any
particular player $i$ in that game, and any precise choice for
information-theoretic function $f$, Prop.~\ref{prop_nonalign} tells us that
there are other directions $\delta
\vec{\theta}$ in which we infinitesimally reduce the noise
in the signal about the state of the highway so that information increases \emph{and
  expected utility for player $i$ goes up}. 
Prop.~\ref{prop_nonalign_generic} then tell us that 
this property is generic, i.e., it is true for
almost all utility functions that differ only slightly from the ones
in Ex.~\ref{ex:braess}.

This may have important real-world applications. Computer routing
networks (e.g., networks that route communication packets, jobs, etc.), are
typically run in a distributed fashion where each router adaptively
modifies its behavior to try to optimize its own ``utility function"
based on noisy signals it receives concerning the state of the overall
network. In addition to arising in human transportation traffic
networks, Braess' paradox often arises in such computer routing
networks~\cite{Rough02}. This has led to a large body of work on how to redesign the
adaptive algorithms used by the routers to avoid the paradox and
associated loss of expected
utility ~\cite{wotu02a,LeyBrown09}.  Our
analysis suggests that it should be possible to avoid Braess' paradox
--- and indeed to increase the expected utility --- without
redesigning the routing algorithms, but instead changing the signals
the routers receive concerning the state of the network.

\section{Future work}

In this paper we primarily used our geometric framework to analyze the
relationship between changes in information and associated changes in
expected utility.  However the framework is far more broadly
applicable. It can be used to analyze the relationship between
expected utilities and \emph{any} function $f(\vec{\theta})$ that depends on the parameters
specifying the game. $f$ is not restricted to be an information-theoretic function.

As a particularly simple illustration of this breadth of the applicability of our framework, we can use
it to analyze the relationship between expected utility and a
function $f(\vec{\theta})$ that simply returns one of the components
of $\vec{\theta}$.  This analysis reveals what might be called scenarios with
``negative value of utility":

\begin{example}
  This example concerns a simultaneous move game of two players, who
  have two possible moves each. The bimatrix of the game is
  as follows:

  \begin{center}
    \begin{tabular}{ l  | c | r }
      & \;\;\; \textbf{L} & \textbf{R}  \;\;\; \\ \hline
      \textbf{T} \; & (1,4) & (4-$\theta$,1)\\ \hline
      \textbf{B}\; & (2,2) & (3,3) 
    \end{tabular}
  \end{center}
  where $\theta \ge 0$ is a parameter adjusting the utility of the row
  player.
  
  With $\theta < 1$ this game has no pure strategy Nash
  equilibrium. The unique Nash equilibrium of this game is a mixed
  equilibrium with the row player playing top with $p_T=\frac{1}{4}$
  and the column player playing left with
  $p_L=\frac{1-\theta}{2-\theta} = 1 - \frac{1}{2 - \theta}$. Since
  the column player is indifferent between $L$ and $R$ at the
  equilibrium, the expected utility $V_{col}$ is $\frac{5}{2}$
  independent of $\theta$. While the row player has an expected
  utility of $V_{row} = \frac{5 - 2 \theta}{2 - \theta}$.

  Now, $\frac{\partial V_{row}}{\partial \theta} = \frac{1}{(2 -
    \theta)^2}$ is strictly positive which means that the row player
  prefers $\theta$ too increase. At the same time, this reduces her
  utility for the outcome $T, R$ and thus, one could say that the row
  player has a {\em negative value of utility}.{\footnote{As an aside,
      this phenomenon can be exploited by an external party who can
      enter a publicly visible binding contract with Row under which
      Row must pay the tax to the external party, to the benefit of
      both Row and that external party.  This is an example of an
      external party ``mining'' a
      game~\cite{bono_wolpert_game_mining_2014}.}}

  \label{ex:neg_val_ut}
\end{example}

As Ex.~\ref{ex:neg_val_ut} illustrates, 
our framework allows us to quantify the value of any infinitesimal change in
the change in any function $f(\vec{\theta})$ induced by
by an infinitesimal change in the game parameter. Indeed, we can evaluate such a value even if
the induced change in $f(\vec{\theta})$
is indirect, arising via the effect of the change in $\vec{\theta}$ on the
player strategy profile that is mediated by the equilibrium concept (as in the
leader-follower game from Sec.~\ref{sec:two_player} and as in
Ex.~\ref{ex:braess}).  

This breadth of quantities whose ``value" can be evaluated using our framework allows us to analyze the
trade-offs, inherent in a game, among multiple changes to the parameters specifying
that game and/or functions of those parameters. These trade-offs can be used to calculate
marginal rates of substitution, in which we compare the values assigned by a single
player to different changes in the game parameter. Assuming transferrable utility,
they can also be used to compare the values assigned by different
players to the same change in the game parameter. Under that assumption they can even be used to compare the values
assigned by different players to different changes in the game parameter. For example, 
with our framework we can
quantify the relationship between the value to player $i$ of changes to the information capacity of one information
channel in a game, and the value  to a separate player $j \ne i$ of 
changes to $i$'s utility function in that game. Similarly, by
considering the QRE rationality parameter $\beta$ as a component of the game parameter $\vec{\theta}$, we can study the
value of rationality and its relations with values of other quantities. This
allows us to do things like quantify the relationship between extra rationality by player $i$ and extra information
available to that player. Colloquially, ``$i$'s knowing this much more about certain utility-relevant quantities before
they make their move is equivalent to their being this much smarter when they make that move". 
Future work involves making a detailed investigation of these kinds of trade-offs.

Another issue we intend to investigate in future work is ``second-order effects". 
Changing the parameter of a game $\vec{\theta}$ infinitesimally
can affect the value of \emph{every} function $g: \vec{\theta} \in \Theta \rightarrow \R$, not just functions like
expected utilities, mutual information, etc.
In particular, this is true if $g$ is the differential value of 
some $f(\vec{\theta})$ evaluated at $\vec{\theta}'$. 
Changing $\vec{\theta}$ will not just change the expected utility of player $i$ and
what $f(\vec{\theta})$ is; it will also change the differential value to player $i$ of changes to $f$.
As an example, changing the conditional distribution specifying one
information channel in a game will in general change the differential value
of a \emph{different} information channel. 
In future work we hope to investigate these second order effects and whether they
depend on ``second-order" properties of the geometric structure of the game, like
the Ricci curvature tensor of the Fisher metric.

Since in general games have multiple
equilibria, it is often impossible to infinitesimally change the game parameter in 
way that is Pareto-optimal simultaneously for all equilibria. In other future work, we intend to investigate
the generalizations of Prop.~\ref{prop:pareto_neg_val_f} that determine when such changes 
are possible.

Finally, our framework provides important new capabilities for
policy making or mechanism design, by providing guidance to a regulator external to
a game on how to modify the components of the game parameter vector
that are under their control. As discussed above, in games
with multiple equilibria --- arguably the majority of real-world games of
interest to a regulator --- it makes sense for the regulator to determine which
equilibrium the players have adopted for a current value of $\vec{\theta}$ simply by observing how the
players are behaving. Using our framework, it may be possible for that regulator
to gradually change  $\vec{\theta}$ from that starting value, to move the equilibrium
down the branch that has the current equilibrium, to where it intersects with
another branch, and then guide the equilibrium along that second branch,
back to the starting  $\vec{\theta}$.
In this way the regulator may be able to gradually change player behavior 
to go from a current equilibrium of a game specified by  $\vec{\theta}$, to a different, Pareto-superior equilibrium
for that same  $\vec{\theta}$. In some situations, the regulator should even
be able to design that trajectory through the space of game parameter vectors so
that each infinitesimal change to  $\vec{\theta}$ along that trajectory is Pareto-improving. (See \cite{woha12} for an example of
this kind of approach.)

\section{Conclusion}

In this paper we introduce a new framework to study the value of information
in arbitrary noncooperative games.
Our starting point is the well-established concept of marginal utility of a good in a decision problem.
We present a very natural way to generalize this, to marginal utility to a player in a noncooperative game
of an arbitrary function of a parameter specifying that game. Interestingly, this generalization
forces us to introduce a metric over the space of game parameters. In this
way we show that geometry is intrinsic to noncooperative game theory,
with each game specifying its own associated Riemannian manifold. 

In parallel with this analysis, we consider the issue of how best to quantify 
economically relevant aspects of the information
structure in that game. We argue that
mutual information is a natural way to do this, using very simple considerations grounded in economics.
As we discuss, such a (cardinal) quantification of information also has several advantages over the partial
orders commonly used in the past to investigate the role of information in games.

We then combine our two analyses into a unified framework. We do this by taking the ``function of a parameter
specifying a game" from the first analysis to be the mutual information
that was motivated in our second analysis. This combination 
of our two analyses reveals how a game's geometry governs the relation between
changes to its information structure and changes to the expected
utilities of the players.

We then use our framework to derive general
conditions for the existence of negative value of information. In
particular, we show that for \emph{almost any} game, there are 
changes to the information structure of
the game that both increase the information available to any particular player
in that game and hurt that player.  We then extend our analysis to
characterize the set of games where there are such changes that simultaneously 
increase the information available to all players, while hurting all players.
We illustrate our framework with computational analyses of a single-player decision scenario as well as a two
player leader-follower game.

Finally, we note that our framework can by applied to analyze the effects
of arbitrary changes to a game, not just changes to its information structure.
As a particularly simple example, we construct a game that has
``negative value of utility'', in which the expected value of a player's utility $u$ \emph{increases}
when we change the game by applying a monotonically \emph{decreasing} transformation to $u$. More generally,
the breadth of applicability of our framework allows us to analyze marginal
rates of substitution of different aspects of an information structure, of the utility
functions of the players, or of any other parameters specifying the game the players
are engaged in.

{\subsection*{Acknowledgments} NB acknowledges support by the Klaus
  Tschira Stiftung. The research of JJ was supported by the ERC
  Advanced Investigator Grant FP7-267087. DHW acknowledges support of
  the Santa Fe Institute.}

\begin{appendix}
\section{Review of conic hulls}
\label{app:ConicHull}

In our analysis below
we shall work in some tangent space of a parameter manifold that is equipped with the Fisher metric $\langle .,. \rangle$. Now, whenever we have  a scalar product on such a (finite dimensional) vector space, we can perform a linear transformation of that vector space to turn that scalar product into the standard Euclidean one. Thus, in 
our analysis below, we shall essentially be doing elementary Euclidean geometry, just in a different
coordinate system.{\footnote{The nonlinear nature of the Fisher or any other Riemannian metric only comes into play when we look at the tangent spaces of several points simultaneously. What linear coordinate transformation turns the Fisher metric into the Euclidean one will  depend on the particular tangent space in which we are working for a particular $\vec{\theta}$, and in general, there will be no such transformation 
that works for all tangent spaces simultaneously.}} This will be reflected in the fact that simple graphical pictures and intuition
can be used to understand many of our results.

We start by reviewing some simple linear algebra in Euclidean space, that is, $\R^d$ equipped with the Euclidean scalar product $\langle .,. \rangle$. 
For a nonzero vector $\vec{v} \in \R^d$, the hyperplane $H_0(\vec{v})=\{ \vec{w}:  \langle \vec{v}, \vec{w}\rangle =0\}$ separates $\R^d$ into two halfspaces $H_{+}(\vec{v}) \; (H_-(\vec{v}) \; resp.) \equiv \{ \vec{w}  :  \langle \vec{v},\vec{w}\rangle >0 \; (<0 \; resp.)\}$, and for two vectors $\vec{v}_1,\vec{v}_2$ that are not positively collinear, the
associated halfspaces overlap, e.g., 
  \begin{equation}
    \label{eq:la1}
H_+(\vec{v}_1)\cap H_-(\vec{v}_2) \neq \emptyset. 
  \end{equation} 

When we have several nonzero vectors $\vec{v}_i$,   their \textbf{conic hull}~\cite{bova03} is defined as
\begin{equation}
\label{eq:la2}
Con(\{\vec{v}_i\}) \equiv \left\{\sum_i \alpha^i \vec{v}_i : \alpha_i \ge 0 \forall i \right\}
\end{equation}
Note that $C=Con(\{\vec{v}_i\})$ is a cone since whenever $\vec{v}\in C$, it follows that $k \vec{v} \in C$ for $k > 0$. Note also
that this cone is convex because whenever $\vec{v},\vec{w} \in C$, then $\lambda \vec{v} +(1-\lambda) \vec{w} \in C$ for $0\le \lambda \le 1$. 

A cone $C$ is called \textbf{pointed} if it does not contain any
bi-infinite straight line. The following will be used in the
sequel\shortversion{\footnote{For the sake of space we omit the proof
    of this and the following basic properties.}}:
\begin{lemma}\label{lemma:conic}
If the vectors $\vec{v}_i$ are linearly independent, then $Con(\{\vec{v}_i\})$ is pointed. 
\label{lemma:1}
\end{lemma}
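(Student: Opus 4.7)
The plan is to give a short contradiction argument based on the defining property of linear independence. The key observation is that for a convex cone $C$, the property of being pointed (containing no bi-infinite line) is equivalent to the algebraic condition $C \cap (-C) = \{0\}$. So rather than work directly with bi-infinite lines inside $Con(\{\vec{v}_i\})$, I would reduce to showing that the only vector $\vec{w}$ such that both $\vec{w}$ and $-\vec{w}$ lie in $Con(\{\vec{v}_i\})$ is $\vec{w} = 0$.

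First I would briefly justify this equivalence: if $C$ is a convex cone and contains a bi-infinite line $\{\vec{p} + t\vec{q} : t \in \R\}$ with $\vec{q}\neq 0$, then dividing through by $|t|$ and using that $Con(\{\vec{v}_i\})$ is closed (as the image of the closed orthant $\R_{\ge 0}^k$ under the continuous linear map $(\alpha^1,\ldots,\alpha^k)\mapsto \sum_i \alpha^i \vec{v}_i$, which is proper on its range because the $\vec{v}_i$ are linearly independent), one obtains $\vec{q}, -\vec{q}\in C$; conversely, if $\vec{w}\neq 0$ and $\pm\vec{w}\in C$, convexity gives the whole line $\R\vec{w}\subset C$.

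Given this equivalence, the core of the argument is then immediate. Suppose some nonzero $\vec{w}$ satisfies both $\vec{w}\in Con(\{\vec{v}_i\})$ and $-\vec{w}\in Con(\{\vec{v}_i\})$. Then we can write
\begin{equation}
\vec{w} = \sum_i \alpha^i \vec{v}_i, \qquad -\vec{w} = \sum_i \beta^i \vec{v}_i,
\end{equation}
with all $\alpha^i,\beta^i \ge 0$. Adding the two expressions gives $0 = \sum_i (\alpha^i+\beta^i)\vec{v}_i$, and by linear independence of the $\vec{v}_i$ each coefficient $\alpha^i+\beta^i$ must vanish. Since $\alpha^i,\beta^i\ge 0$ this forces $\alpha^i = \beta^i = 0$ for every $i$, and hence $\vec{w}=0$, contradicting the choice of $\vec{w}$.

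The proof is essentially a one-line application of linear independence once the right reformulation is in place, so there is no real obstacle to the main computation. The only minor subtlety is making the ``no bi-infinite line'' definition precise enough to connect it to the algebraic condition $C\cap(-C)=\{0\}$; this is where I would take the most care in the write-up, since without the convexity/closedness remark one could in principle worry about a line whose direction vector lies in the closure of $C$ but not in $C$ itself. For the conic hull of finitely many vectors this closure issue does not actually arise, so the argument collapses to the short calculation above.
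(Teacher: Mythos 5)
Your proposal is correct and follows essentially the same route as the paper: the paper likewise passes to a nonzero $\vec{v}$ with both $\vec{v}$ and $-\vec{v}$ in $Con(\{\vec{v}_i\})$, adds the two nonnegative representations, and derives a nontrivial linear relation contradicting independence. The only difference is that you spell out the equivalence between ``contains no bi-infinite line'' and $C\cap(-C)=\{0\}$ (using convexity and closedness of the finitely generated cone), a step the paper simply takes for granted.
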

\fullversion{
\begin{proof}
If  $Con(\{\vec{v}_i\})$ were not pointed, there would exist some $\vec{v} \neq 0$ with both  $\vec{v}$ and $-\vec{v}$ in $Con(\{\vec{v}_i\})$. Thus, we have both $\vec{v}=\sum_i \lambda_i \vec{v}_i$ and $-\vec{v}=\sum_i \mu_i \vec{v}_i$ with $\lambda_i \ge 0, \mu_i \ge 0$ for all $i$, and not all of them simultaneously 0. But then $\sum_i (\lambda_i +\mu_i) \vec{v}_i=0$ which is a nontrivial linear relation between the $\vec{v}_i$ violating our assumption that they be linearly independent. 
\end{proof}
}
\noindent Of course, the converse of Lemma~\ref{lemma:1} does not hold in general, because for all $\vec{v}\in Con(\{\vec{v}_i\})$, $Con(\{\vec{v}_i\})=Con(\{\vec{v}_i\}
\cup \vec{v})$.

We will also need to use the definition that the \textbf{dual} to a conic hull $Con(\{\vec{v}_i\})$ is
\begin{equation}
  \label{eq:la3}
Con(\{\vec{v}_i\})^\bot \equiv \bigcap_i H_-(\vec{v}_i).  
\end{equation}
Equivalently, 
\begin{equation}
  \label{eq:la4}
  Con(\{\vec{v}_i\})^\bot =\{ \vec{v}: \langle \vec{v}, \vec{v}_i \rangle <0 \text{ for all } i\}.
\end{equation}

The following elementary property  relating conic hulls and pointedness is used in the sequel.
\begin{lemma}
$Con(\{\vec{v}_i\})$ is not pointed $\Rightarrow$ $[Con(\{\vec{v}_i\})]^\bot = \varnothing$ $\Rightarrow$ the vectors $\vec{v}_i$ are
linearly dependent.
\label{lemma:conic_props}
\end{lemma}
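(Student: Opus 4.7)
The plan is to prove each implication in turn, working in the tangent space equipped with the Euclidean scalar product.

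For the first implication, I would start with the definition of non-pointedness: if $C = Con(\{\vec{v}_i\})$ is not pointed then there is a nonzero $\vec{v}$ with both $\vec{v}$ and $-\vec{v}$ lying in $C$. Unfolding the definition of $C$ in \eqref{eq:la2}, this gives two representations $\vec{v} = \sum_i \lambda_i \vec{v}_i$ and $-\vec{v} = \sum_i \mu_i \vec{v}_i$ with all $\lambda_i, \mu_i \ge 0$ and at least one coefficient in each sum strictly positive (else $\vec{v}=0$). Now suppose for contradiction that there is some $\vec{w} \in C^\bot$, so by \eqref{eq:la4} we have $\langle \vec{w}, \vec{v}_i\rangle < 0$ for every $i$. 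Linearity gives $\langle \vec{w}, \vec{v}\rangle = \sum_i \lambda_i \langle \vec{w},\vec{v}_i\rangle < 0$, because every summand is nonpositive and at least one is strictly negative; similarly $\langle \vec{w}, -\vec{v}\rangle < 0$. But these two numbers sum to zero, a contradiction. Hence $C^\bot = \varnothing$.

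For the second implication I would argue by contrapositive: assuming the vectors $\vec{v}_i$ are linearly independent, I construct an explicit element of $C^\bot$. Let $G$ be the Gram matrix with entries $G_{ij} = \langle \vec{v}_i, \vec{v}_j\rangle$; linear independence makes $G$ positive definite and hence invertible. Choose coefficients $c = -G^{-1}\mathbf{1}$, where $\mathbf{1}=(1,\dots,1)^\top$, and set $\vec{w} = \sum_j c_j \vec{v}_j$. Then $\langle \vec{w}, \vec{v}_i\rangle = \sum_j c_j G_{ij} = -1 < 0$ for every $i$, so $\vec{w} \in C^\bot$ and $C^\bot \ne \varnothing$. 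Taking contrapositives gives the desired implication.

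I do not expect any genuine obstacle here; both implications are essentially linear-algebraic facts once the definitions \eqref{eq:la2}--\eqref{eq:la4} are unpacked. The only point requiring a small amount of care is making sure that the strict inequality in the definition of $C^\bot$ is used correctly in the first implication (one needs at least one of the $\lambda_i$ to be strictly positive to pass from $\le 0$ to $< 0$), and in the second implication one must observe that linear independence is exactly what makes $G$ invertible so the explicit construction of $\vec{w}$ goes through.
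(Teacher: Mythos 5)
Your proof is correct, and the first implication is essentially the paper's own argument: the paper adds the two conic representations of $\pm\vec{v}$ to obtain nonnegative, not-all-zero coefficients $c_i$ with $\sum_i c_i \vec{v}_i = 0$ and then pairs this relation against a putative $\vec{w}$ in the dual, while you pair $\vec{w}$ against $\vec{v}$ and $-\vec{v}$ separately; the computation is the same. For the second implication both you and the paper argue by contrapositive and exhibit an explicit element of $[Con(\{\vec{v}_i\})]^\bot$, but the constructions differ. The paper forms the matrix whose rows are the $\vec{v}_i$, distinguishes the cases $N=d$ and $N<d$ (extending the $\vec{v}_i$ to a nonsingular square matrix in the latter case), and solves a linear system to get a vector with prescribed negative inner products against all the $\vec{v}_i$. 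You instead use the Gram matrix $G_{ij}=\langle\vec{v}_i,\vec{v}_j\rangle$, which is positive definite (hence invertible) exactly because the $\vec{v}_i$ are linearly independent, and set $\vec{w}=\sum_j c_j\vec{v}_j$ with $c=-G^{-1}\mathbf{1}$, so that $\langle\vec{w},\vec{v}_i\rangle=-1<0$ for every $i$. Your route avoids the case split on $N$ versus $d$ and the basis-extension step, works verbatim for an arbitrary inner product (so no prior reduction of the Fisher metric to the Euclidean one is needed), and yields a dual vector lying in the span of the $\vec{v}_i$; the paper's route asks only for invertibility of a square matrix with linearly independent rows rather than positive definiteness of the Gram matrix. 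Both arguments are complete.
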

\fullversion{
\begin{proof}
We prove the implications in turn.

If $Con(\{\vec{v}_i\})$ is not pointed then there is a non-null vector $\vec{u}$ such that both $\sum_i a_i \vec{v}_i = \vec{u}$ and $\sum b_i \vec{v}_i = -\vec{u}$
where no $a_i$ or $b_i$ is less than zero. Adding these two expressions, there is some set of values $c_i$
none of which are less than zero and at least one of which is nonzero such that $\sum_i c_i \vec{v}_i = 0$.

Now hypothesize that $[Con(\{\vec{v}_i\})]^\bot \ne \varnothing$. This would mean that there is some vector $\vec{w}$ such that
$\langle \vec{w}, \vec{v}_i\rangle < 0$ for all $i$. However by definition of the $c_i$, for any vector $\vec{w}$,
\ba
\sum_i c_i \langle \vec{w}, \vec{v}_i \rangle &=& 0 \nonumber
\ea
This contradicts the fact that no $c_i < 0$ and at least one $c_i > 0$. So our hypothesis cannot hold. This
establishes the first implication.

To establish the second implication, note that if $N > d$, then the $\vec{v}_i$ are linearly dependent always, so we only have to consider
the cases $N = d$ and $N < d$. If $N = d$ and the $\vec{v}_i$ are linearly independent, then the $N \times N$ matrix $C$
whose rows are the $\vec{v}_i$ is non-singular. Therefore for \emph{any} vector $\vec{b}$, we can solve $C_j^k a^j = b^k$ for the vector $\vec{a}$. 
So in particular, we can do this if all the components of $\vec{b}$ are less than
zero. For such a $\vec{b}$ and associated vector $\vec{a}$, the dot product between $\vec{a}$ and $\vec{v}_i$ is less than zero for
all $i$, i.e., $\sum_k a^k [v_i]^k < 0$ for all $i$. Moreover, since we restrict attention to Riemannian metrics, the 
inner product between two contravariant
vectors is just the dot product: $\langle \vec{a}, \vec{b} \rangle = \sum_k \vec{a}^k \vec{b}^k$. Hence 
$\vec{a} \in [Con(\{\vec{v}_i\})]^\bot$, i.e., $[Con(\{\vec{v}_i\})]^\bot$ is non-empty.

The remaining case to consider is where $N < d$. Create a matrix $C$ whose first $N$ rows are the vectors
$\vec{v}_i$, and the remaining $d - N$ rows are linearly independent both from one another
and from the first $N$ rows. If the $\vec{v}_i$ are also independent of one another, then $C$ is non-singular. So again,
we can solve for a vector $\vec{a}$ such that $C_j^k a^j = b^k$ where the first $N$ entries of $\vec{b}$ are all
less than zero. And so again, we conclude that this $\vec{a} \in [Con(\{\vec{v}_i\})]^\bot$, i.e., $[Con(\{\vec{v}_i\})]^\bot$ is non-empty.
\end{proof}
}

For any single vector $\vec{v}$, $Con(\vec{v})^\bot =H_-(\vec{v})$. In general, when we enlarge the set of vectors $\vec{v}_i$, $Con(\{\vec{v}_i\})$ gets larger whereas the dual conic hull $Con(\{\vec{v}_i\})^\bot$ becomes smaller. More precisely, 
\begin{lemma}
  Let $C,C_1,C_2$ be nonempty convex cones whose duals are nonempty. Then
  \begin{equation}
    \label{eq:la5}
    C^{\bot \bot}=C
  \end{equation}
and
\begin{equation}
  \label{eq:la6}
  C_1 \subset C_2 \quad \Leftrightarrow \quad C_2^\bot \subset C_1^\bot.
\end{equation}
\label{lemma:conic_hull_duals}
\end{lemma}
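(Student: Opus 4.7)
The plan is to prove the two statements in sequence, with the second following almost immediately from the first via a contrapositive/duality argument. Throughout, I work in the tangent space equipped with the metric, so the scalar product is just the Euclidean dot product after an appropriate linear change of coordinates, and all the geometric intuition carries over. The key tool for the bidual identity will be the separating hyperplane theorem applied to a closed convex cone and a point outside it; the key tool for the order-reversing property will just be unfolding definitions.

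For the implication $C \subseteq C^{\bot\bot}$, I would take $\vec{v} \in C$ and any $\vec{w} \in C^\bot$; by definition of $C^\bot$ we have $\langle \vec{w}, \vec{u} \rangle < 0$ for every generator $\vec{u}$ of $C$, and hence (by nonnegative-linear combinations) for every nonzero element of $C$, giving $\vec{v} \in C^{\bot\bot}$. For the reverse inclusion $C^{\bot\bot} \subseteq C$, I would argue by contrapositive: if $\vec{v} \notin C$, then since $C$ is a nonempty convex cone whose dual is nonempty (so $C$ is in particular pointed, hence contained in an open halfspace and its closure is a proper closed convex set), the separating hyperplane theorem yields a $\vec{w}$ with $\langle \vec{w}, \vec{u}\rangle < 0$ for every $\vec{u} \in C$ and $\langle \vec{w}, \vec{v}\rangle > 0$. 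This $\vec{w}$ lies in $C^\bot$ and witnesses $\vec{v} \notin C^{\bot\bot}$.

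For the equivalence \eqref{eq:la6}, the forward direction is immediate from unfolding definitions: if $C_1 \subseteq C_2$ and $\vec{w} \in C_2^\bot$, then $\langle \vec{w}, \vec{v}\rangle < 0$ for every $\vec{v} \in C_2$, hence for every $\vec{v} \in C_1$, so $\vec{w} \in C_1^\bot$. For the reverse direction, apply this same unfolding once more to $C_2^\bot \subseteq C_1^\bot$ to obtain $C_1^{\bot\bot} \subseteq C_2^{\bot\bot}$, and then invoke \eqref{eq:la5} on both sides to conclude $C_1 \subseteq C_2$.

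The main obstacle will be handling the strict-inequality convention used in the paper's definition of $C^\bot$: with strict inequalities, $C^\bot$ is an open cone and $\vec{0}$ technically sits in $C$ but fails to satisfy $\langle \vec{0}, \vec{w}\rangle < 0$, so $\vec{0} \in C$ does not formally belong to $C^{\bot\bot}$. I would address this by either (a) reading $C$ as the open convex cone obtained by removing the origin from the conic hull, or (b) noting that the separation argument produces strict inequalities precisely because the assumption that the dual is nonempty forces $C$ to be pointed and contained in an open halfspace; in either reading the proof goes through cleanly, and the applications of this lemma in Prop.~\ref{prop_allneg} are insensitive to the treatment of $\vec{0}$.
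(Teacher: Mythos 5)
The paper itself states Lemma~\ref{lemma:conic_hull_duals} without proof (it is offered as an ``elementary property,'' followed only by the remark on why nonempty duals matter), so there is no in-paper argument to compare against; your proposal supplies the standard polar-cone duality argument, and its overall architecture is sound: easy inclusion by expanding conic combinations, hard inclusion by separation, and \eqref{eq:la6} by definition-unfolding plus two applications of \eqref{eq:la5}. Your observation about the origin and the strict-inequality convention is exactly the right caveat, since with the paper's definition $C^{\bot\bot}$ can never contain $\vec{0}$, so \eqref{eq:la5} only holds modulo the convention you describe.

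Two steps do need tightening. First, the separation step overclaims: the separating hyperplane theorem applied to $\vec{v}\notin C$ and the closed cone $C$ only yields $\vec{w}$ with $\langle \vec{w},\vec{u}\rangle \le 0$ for all $\vec{u}\in C$ (strictness for \emph{every} $\vec{u}\in C$ is impossible because $\vec{0}\in C$, and even on $C\setminus\{\vec{0}\}$ the separating functional may vanish on boundary rays --- e.g.\ separating $(0,0,1)$ from $Con(\{(1,0,0),(0,1,0)\})$ in $\R^3$ can return $\vec{w}=(0,0,1)$, which is not in $C^\bot$). The fix is short and uses the hypothesis that $C^\bot\ne\varnothing$: pick any $\vec{w}_1\in C^\bot$ and replace $\vec{w}$ by $\vec{w}+\epsilon\vec{w}_1$ for small $\epsilon>0$; this lies in $C^\bot$ and still has strictly positive product with $\vec{v}$, witnessing $\vec{v}\notin C^{\bot\bot}$ (alternatively, note that $\langle\vec{w},\vec{v}\rangle\ge 0$ for some $\vec{w}\in C^\bot$ already suffices). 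Second, you separate $\vec{v}$ from the closure of $C$, but you only know $\vec{v}\notin C$; for a non-closed convex cone \eqref{eq:la5} is in fact false (take $C$ the open positive quadrant together with the origin: $C^{\bot\bot}$ is the closed quadrant minus the origin). So you should either add closedness as a hypothesis or, as the paper's usage warrants, note that the cones in question are conic hulls of finitely many vectors, hence polyhedral and closed, so $C=\bar{C}$ and your separation argument applies verbatim.
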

\noindent The requirement that the duals be non-empty is crucial in this result. For example, 
if the second part of Lemma~\ref{lemma:conic_hull_duals} held for empty $C^\bot$, then
we would have $C^\bot = \varnothing \Rightarrow C' \subset C$ for any
conic hull $C' \subseteq \R^n$. The only way this could be would be if $C =\R^n$. However 
for any vector $\vec{v} \in \R^n$ with $n > 2$, $[Con(\{\vec{v}, -\vec{v}\})]^\bot = \varnothing$, even though $[Con(\{\vec{v}, -\vec{v}\})]
\ne\R^n$.

\section{Partial derivatives of QREs of MAIDs with respect to game parameters}
\label{app:partial_derivative}

For computations involving the partial derivatives of the
players strategies at a QRE (branch) it can help to explicitly introduce the
normalization constants as an auxiliary variable. The QRE condition
from \equ{QRE} is then replaced by the following conditions
\begin{eqnarray*}
\sigma_i(a_v | x_{pa(v)}; \beta_i,\vec{\theta}) - \frac{e^{\beta_i \E(u_i | a_v, x_{pa(v)}; \vec{\beta}, \vec{\theta})}} {Z_i(x_{pa(v)}; \beta_i,\vec{\theta})} & = & 0 \\
Z_i(x_{pa(v)}; \beta_i,\vec{\theta}) - \sum_{a \in \set{X}_v} {e^{\beta_i \E(u_i | a_v, x_{pa(v)}; \vec{\beta}, \vec{\theta})}} & = & 0
\end{eqnarray*}
for all players $i$, decision nodes $v \in \set{D}_i$ and all states
$a_v \in \set{X}_v, x_v \in \cart_{u \in Pa(v)} \set{X}_u$. (Here and throughout
this section, subscripts on $\sigma$, $Z$, etc. should not be understood as 
specifications of coordinates as in the Einstein summation convention.)

Overall, this gives rise to a total of $M$ equations for $M$ unknown
quantities $\sigma_i(a_v|x_{pa(v)})$, $Z_i(x_{pa(v)})$. 
Using a vector valued
function $\vec{f}$ we can abbreviate the above by the following
equation:
\be
\vec{f}(\vec{\sigma}_{\vec{\beta},\vec{\theta}}, \vec{Z}_{\vec{\beta},\vec{\theta}}, \vec{\beta},\vec{\theta}) = \vec{0}
\label{eq:QREformal}
\ee where $\vec{\sigma}_{\vec{\beta},\vec{\theta}}$ is a vector of all
strategies 
\[ \{\sigma_i(a_v \mid x_v; \beta_i,\vec{\theta}) : i = 1,\ldots,n , v \in \set{D}_i, a_v \in \set{X}_v, x_v \in \cart_{u \in
  Pa(v)} \set{X}_u\}, \]

\noindent $\vec{Z}_{\vec{\beta},\vec{\theta}}$ collects
all normalization constants, and $\vec{0}$ is the 
vector of all 0's. Note that in general, even once the distributions at all 
decision nodes have been fixed, the distributions at chance nodes 
affect the value of $ \E(u_i \mid a_v, x_{pa(v)}; \vec{\beta}, \vec{\theta})$.
Therefore they affect the value of the function $\vec{f}$. This
is why $\vec{f}$ can depend explicitly on $\vec{\theta}$, as well as depend directly on
$\vec{\beta}$.

The (vector-valued) partial derivative of the position of the QRE in
$(\vec{\sigma}_{\vec{\theta}}, \vec{Z}_{\vec{\theta}})$ with respect
to $\vec{\theta}$ is then given by implicit differentiation of \equ{QREformal}
:
  \be
\left[ \begin{array}{c} \frac{\partial \vec{\sigma}_{\vec{\theta}}}{\partial \vec{\theta}} \\
    \frac{\partial \vec{Z}_{\vec{\theta}}}{\partial
      \vec{\theta}} \end{array} \right] 
      = - \left[ \frac{\partial
    \vec{f}}{\partial \vec{\sigma}_{\vec{\theta}}} \quad \frac{\partial
    \vec{f}}{\partial \vec{Z}_{\vec{\theta}}} \right]^{-1}
\frac{\partial \vec{f}}{\partial \vec{\theta}}
\label{eq:QREpartial}
\ee where the dependence on $\vec{\beta}$ is hidden for
clarity, all partial derivatives are evaluated at the QRE, and we
assume that the matrix $\left[ \frac{\partial \vec{f}}{\partial
    \vec{\sigma}_{\vec{\theta}}} \quad \frac{\partial
    \vec{f}}{\partial \vec{Z}_{\vec{\theta}}} \right]$ is invertible
at the point $\vec{\theta}$ at which we are evaluating the partial derivatives.

These equations give the partial derivatives of the mixed strategy
profile. They apply to any MAID, and allow us to write the partial
derivatives of other quantities of interest. In particular, the
partial derivative of the expected utility of any player $i$ is
\begin{equation}
  \frac{\partial V_i}{\partial \vec{\theta}} = \sum_{x \in \set{X}_{\set{V}}} u_i(x) \frac{\partial p(x ; \vec{\theta})} {\partial \vec{\theta}}
  = \sum_{x \in \set{X}_{\set{V}}} u_i(x)  \sum_{v \in \set{V}} \frac{\partial p(x_v \mid x_{pa(v)}; \vec{\theta})} {\partial \vec{\theta}}  \prod_{v' \ne v} p(x_{v'} \mid x_{pa(v')}; \vec{\theta})
  \label{eq:partial_exp_ut}
\end{equation}
where each term $\frac{\partial p(x_{v} \mid x_{pa(v)};
  \vec{\theta})}{\partial \vec{\theta}}$ is given by the appropriate
component of Eq.~\eqref{eq:QREpartial} if $v$ is a decision
node. (For the other, chance nodes, $\frac{\partial p(x_v \mid
  x_{pa(v)}; \vec{\theta})}{\partial \vec{\theta}}$ can be calculated directly).
Similarly, the partial derivatives of other functions of interest such
as mutual informations between certain nodes of the MAID can be
calculated from Eq.~\eqref{eq:QREpartial}. 

Evaluating those derivatives and the additional
ones needed for the Fisher metric by hand can be very tedious, even
for small games. Here, we used automatic
differentiation \cite{AutoDiff} to obtain numerical results for certain parameter
settings and equilibrium branches. Note that
automatic differentiation is not a numerical approximation, like 
finite differences or the adjoint method. Rather it uses the chain rule to evaluate the derivative
alongside the value of the function.

\end{appendix}

\bibliographystyle{econometrica}
\bibliography{infogames-literature}
\end{document}